\def\cqedsymbol{\ifmmode$\lrcorner$\else{\unskip\nobreak\hfil
\penalty50\hskip1em\null\nobreak\hfil$\lrcorner$
\parfillskip=0pt\finalhyphendemerits=0\endgraf}\fi}
\newtheorem{lemma}{Lemma}[section]
\newtheorem{theorem}[lemma]{Theorem}
\newtheorem{claim}[lemma]{Claim}
\theoremstyle{definition}
\newtheorem{definition}[lemma]{Definition}
\newcommand{\myparagraph}[1]{\paragraph{#1}}
\newcommand{\Oh}{\mathcal{O}}
\newcommand{\bitv}[1]{\ensuremath{[2]^{#1}}}
\newcommand{\mincut}[4]{\ensuremath{\operatorname{mincut}_{#1,#2}(#3,#4)}}
\newcommand{\R}{\mathbb{R}}
\newcommand{\rank}[1]{\ensuremath{\operatorname{rank}({#1})}}
\newcommand{\du}[1]{#1}
\newcommand{\duu}[1]{\ensuremath{{#1}^*}}
\newcommand{\term}{\ensuremath{Q}}
\newcommand{\tf}{\ensuremath{Q}}
\newcommand{\tfs}{\ensuremath{S}}
\newcommand{\sign}[1]{\ensuremath{\chi(#1)}}
\newcommand{\cycle}{\ensuremath{\mathcal{C}}}
\newcommand{\rev}[1]{\ensuremath{\overleftarrow{#1}}}
\newcommand{\dec}[1]{\ensuremath{\operatorname{dec}(#1)}}
\title{An exponential lower bound for cut sparsifiers in planar graphs\thanks{%
This research is a part of projects that have received funding from the European Research Council (ERC) under the European Union's Horizon 2020 research and innovation programme
under grant agreements No 714704 (Marcin Pilipczuk and Anna Zych-Pawlewicz).
Nikolai Karpov has been supported by the Warsaw Centre of Mathematics and Computer Science and the Government of the Russian Federation (grant 14.Z50.31.0030).}}
\author{ 
  Nikolai Karpov\thanks{St. Petersburg Department of V.A. Steklov Institute of Mathematics of the Russian Academy of Sciences, Russia and Institute of Informatics, University of Warsaw, Poland, \texttt{kimaska@gmail.com}.}
  \and 
  Marcin Pilipczuk\thanks{
    Institute of Informatics, University of Warsaw, Poland, \texttt{malcin@mimuw.edu.pl}.
  }
  \and 
  Anna Zych-Pawlewicz\thanks{
    Institute of Informatics, University of Warsaw, Poland, \texttt{anka@mimuw.edu.pl}.
  }
}
\date{}
\begin{document}

\maketitle

\begin{textblock}{20}(0, 12.5)
\includegraphics[width=40px]{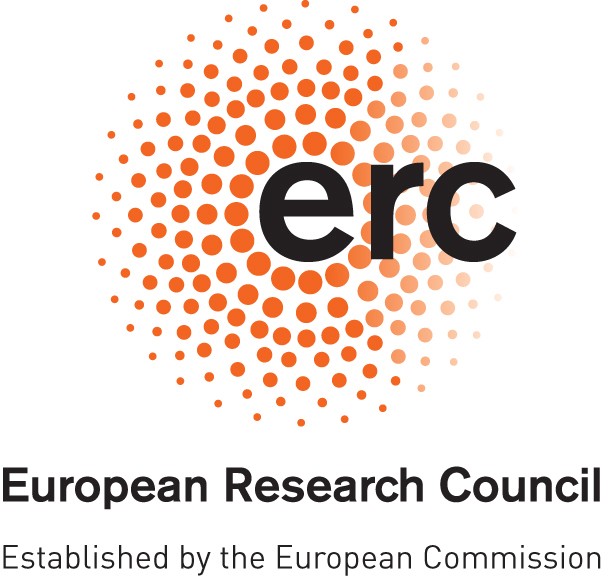}%
\end{textblock}
\begin{textblock}{20}(-0.25, 12.9)
\includegraphics[width=60px]{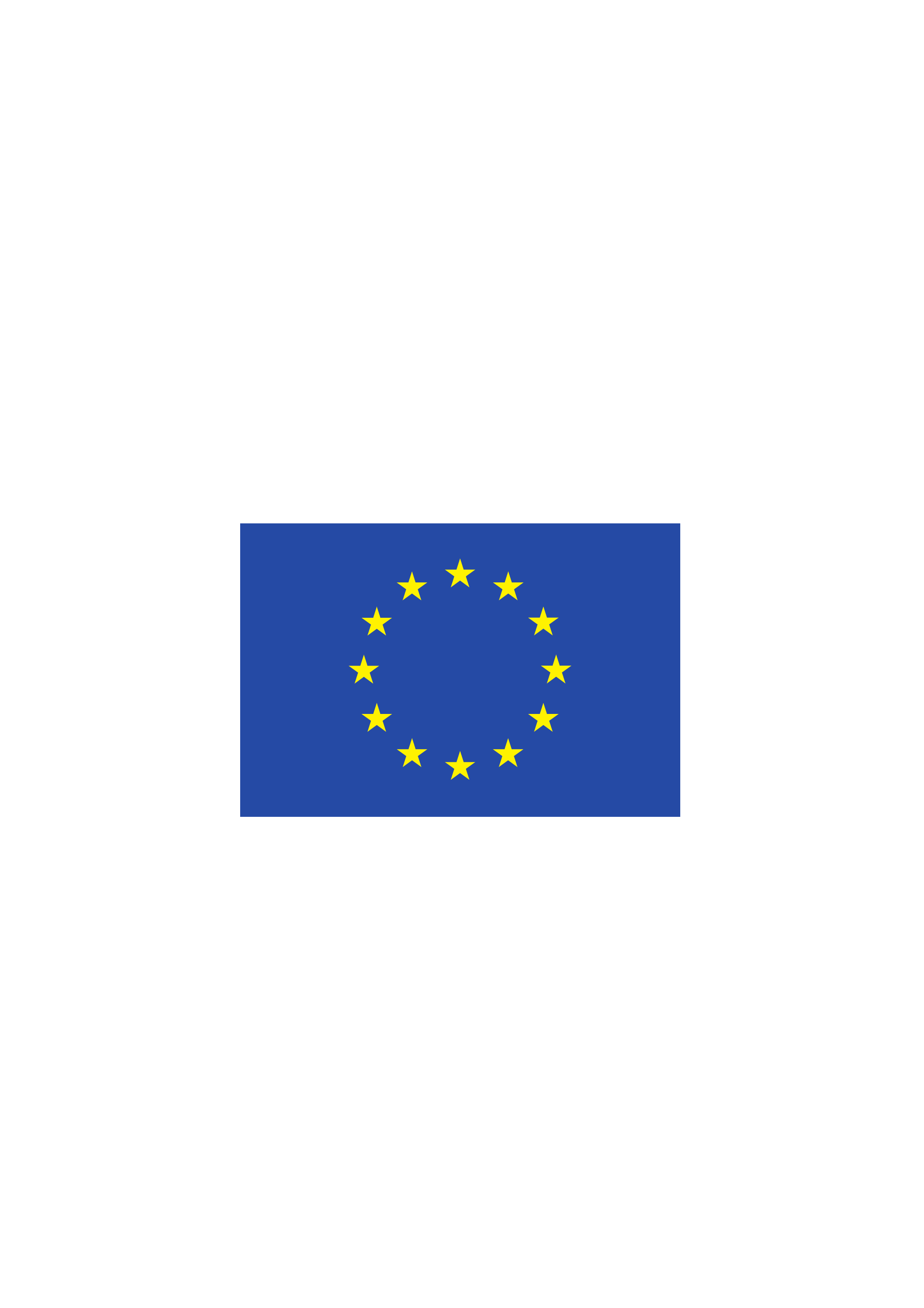}%
\end{textblock}

\begin{abstract}
Given an edge-weighted graph $G$ with a set $\term$ of $k$ terminals,
a \emph{mimicking network} is a graph with the same set of terminals that exactly preserves
the sizes of minimum cuts between any partition of the terminals.
A natural question in the area of graph compression is to provide as small mimicking networks as possible for
input graph $G$ being either an arbitrary graph or coming from a specific graph class.

In this note we show an exponential lower bound for cut mimicking networks in planar graphs:
there are edge-weighted planar graphs with $k$ terminals that require $2^{k-2}$ edges in any 
mimicking network.
This nearly matches an upper bound of $\Oh(k 2^{2k})$ of Krauthgamer and Rika [SODA 2013, arXiv:1702.05951]
and is in sharp contrast with the $\Oh(k^2)$ upper bound under the assumption that all terminals lie on a single
face [Goranci, Henzinger, Peng, arXiv:1702.01136].
As a side result we show a hard instance for double-exponential upper bounds given by Hagerup, Katajainen, Nishimura, and Ragde~[JCSS 1998],
   Khan and Raghavendra~[IPL 2014],
   and Chambers and Eppstein~[JGAA 2013].
\end{abstract}

\section{Introduction}\label{sec:intro}
One of the most popular paradigms when designing effective algorithms is preprocessing.
These days in many applications, in particular mobile ones, even though fast running time is desired, the memory usage is the main limitation. The preprocessing needed for such applications is to reduce the size of the input data prior to some resource-demanding computations,
 without (significantly) changing the answer to the problem being solved.
In this work we focus on this kind of preprocessing, known also as graph compression, for flows and cuts.
The input graph needs to be compressed while preserving its essential flow and cut properties.

Central to our work is the concept of a \emph{mimicking network}, introduced by 
Hagerup, Katajainen, Nishimura, and Ragde~\cite{HagerupKNR98}.
Let $G$ be an edge-weighted graph with a set $Q \subseteq V(G)$ of $k$ terminals.
For a partition $Q = S \uplus \bar{S}$, 
a minimum cut between $S$ and $\bar{S}$ is called a \emph{minimum $S$-separating cut}. 
A \emph{mimicking network} is an edge-weighted graph
$G'$ with $Q \subseteq V(G')$ such that the weights of minimum $S$-separating cuts
are equal in $G$ and $G'$ for every partition $Q = S \uplus \bar{S}$.
Hagerup et al~\cite{HagerupKNR98} 
observed the following simple preprocessing step: if two vertices $u$ and $v$
are always on the same side of the minimum cut between $S$ and $\bar{S}$ for every choice
of the partition $Q = S \uplus \bar{S}$, then they can be merged without changing the size
of any minimum $S$-separating cut. 
This procedure always
leads to a mimicking network with at most $2^{2^k}$ vertices. 

The above upper bound can be improved to a still double-exponential bound
of roughly $2^{\binom{k-1}{\lfloor (k-1)/2 \rfloor}}$, as observed both by 
Khan and Raghavendra~\cite{KhanR14} and by Chambers and Eppstein~\cite{ChambersE13}.
In 2013, Krauthgamer and Rika~\cite{KrauthgamerR13} observed that the aforementioned preprocessing
step can be adjusted to yield a mimicking network of size $\Oh(k^2 2^{2k})$ for planar graphs.
Furthermore, they introduced a framework for proving lower bounds, and showed that
there are (non-planar) graphs, for which any mimicking network 
has $2^{\Omega(k)}$ edges; a slightly stronger lower bound 
of $2^{(k-1)/2}$ has been shown by Khan and Raghavendra~\cite{KhanR14}.
On the other hand, for planar graphs the lower bound of~\cite{KrauthgamerR13} is $\Omega(k^2)$. 
Furthermore, the planar graph lower bound applies even in the special case when all the terminals
lie on the same face.

Very recently, two improvements upon these results for planar graphs have been announced. 
In a sequel paper, Krauthgamer and Rika~\cite{KrauthgamerR17} improve the 
polynomial factor in the upper bound for planar graphs to $\Oh(k 2^{2k})$ and show that the exponential dependency
actually adheres only to the \emph{number of faces containing terminals}: if
the terminals lie on $\gamma$ faces, one can obtain a mimicking network
of size $\Oh(\gamma 2^{2\gamma} k^4)$. 
In a different work, Goranci, Henzinger, and Peng~\cite{GoranciHP17} showed a tight $\Oh(k^2)$ upper bound
for mimicking networks for planar graphs with all terminals on a single face.

\myparagraph{Our results.}
We complement these results by showing an exponential lower bound for mimicking networks in planar graphs.
\begin{theorem}\label{thm:main}
For every integer $k \geq 3$,
there exists a planar graph $G$ with a set $Q$ of $k$ terminals
and edge cost function under which every mimicking network for $G$ has
at least $2^{k-2}$ edges.
\end{theorem}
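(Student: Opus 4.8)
The plan is to split the argument into a ``dimension-counting'' reduction and the construction of $G$ proper, in the spirit of the Krauthgamer--Rika lower-bound framework. \emph{The reduction.} Let $H$ be any mimicking network for $G$, with weight vector $w_H\in\mathbb{R}_{\ge 0}^{E(H)}$; for a bipartition $\pi=(S,\bar S)$ of $\term$ fix a minimum $S$-separating cut of $H$ and let $x_\pi\in\{0,1\}^{E(H)}$ be its indicator, so the mimicked value is $\langle w_H,x_\pi\rangle$. Suppose $G$ admits a family of $N=2^{k-2}$ bipartitions $\pi_1,\dots,\pi_N$ such that, for some open set $R$ of weight vectors for $G$, each $\pi_j$ has a \emph{unique} minimum separating cut, with indicator $y_j\in\{0,1\}^{E(G)}$, and the $y_1,\dots,y_N$ are \emph{linearly independent}. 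Then the cost function can be chosen so that every mimicking network has $\ge N$ edges: on $R$ the value of $\pi_j$ is the linear function $w\mapsto\langle w,y_j\rangle$, and since the $y_j$ are independent the map $w\mapsto(\langle w,y_1\rangle,\dots,\langle w,y_N\rangle)$ is a surjective linear map $\mathbb{R}^{E(G)}\to\mathbb{R}^N$, so the image of $R$ is open in $\mathbb{R}^N$; whereas the set of value-tuples $(\text{value of }\pi_1,\dots,\text{value of }\pi_N)$ realizable by a weighted graph with at most $N-1$ edges is a finite union of proper linear subspaces of $\mathbb{R}^N$ (for a fixed edge set and a fixed choice of which cut realizes each $\pi_j$ the tuple is a linear image of at most $N-1$ weights, hence lies in a subspace of dimension $\le N-1$, and there are finitely many relevant combinatorial types), hence nowhere dense. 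Choosing the weights of $G$ inside $R$ but off this nowhere-dense set gives the desired instance. So it suffices to build a \emph{planar} $G$ with $k$ terminals and such a family; a convenient sufficient condition for independence is that each $\pi_j$ has a \emph{private edge} $e_j$ of $G$ lying on the minimum $\pi_j$-cut but on no minimum $\pi_i$-cut for $i\ne j$.

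\emph{The construction.} I would obtain $G=G_k$ by a recursion in which each extra terminal roughly doubles the number of private bipartitions (and the number of edges). The base case $G_3$ has two hub terminals $s,t$ and one control terminal $p$ in a small planar gadget where separating $\{s\}\mid\{t,p\}$ and $\{s,p\}\mid\{t\}$ have rigid, distinct minimum cuts with two private edges. For the step, take two disjoint copies $G_{k-1}^{(0)},G_{k-1}^{(1)}$ sharing control terminals but with fresh hub terminals, add a new control terminal $p_{k-2}$ together with a small ``selector'' gadget, and embed it with one copy inside and the other outside a separating cycle, so $p_{k-2}$ lands on a brand-new face. The routing is arranged so that each of the $2^{k-2}$ bipartitions ``$s$ left, $t$ right, $p_i$ on the side prescribed by $X\subseteq\{1,\dots,k-2\}$'' has its unique minimum cut lying entirely in the copy selected by whether $k-2\in X$, and inside that copy equal to the minimum cut of the corresponding $G_{k-1}$-bipartition together with a fixed framing set of edges. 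Since cuts through the two copies use disjoint edges and within each copy the inductive private-edge property holds, $G_k$ again has the private-edge property for all $2^{k-2}$ bipartitions.

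The \textbf{main obstacle} is the correctness of the recursive step: one must choose the framing and selector-gadget weights so that, \emph{simultaneously for all $2^{k-2}$ bipartitions}, no cheaper cut crosses between the two copies or short-circuits through the selector, and so that the intended cut is the \emph{unique} optimum. This forces a careful balance --- framing and hub capacities large enough to forbid crossings, leaf capacities generic enough to pin down uniqueness --- maintained across exponentially many scenarios at once, and I expect essentially all of the real work to be here. A secondary point is the embedding: since all terminals on one face would cap the mimicking-network size at $\Oh(k^2)$ by the bound of Goranci, Henzinger and Peng, the construction must genuinely spread the $k-2$ control terminals over $\Theta(k)$ faces, which the inside/outside doubling is designed to ensure.
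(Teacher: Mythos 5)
Your first half is fine: the ``dimension-counting'' reduction you sketch is essentially a re-derivation of the Main Technical Lemma of Krauthgamer and Rika that the paper invokes (unique minimum cuts plus a full-rank cutset--edge incidence pattern yield, after a generic perturbation of the weights, a lower bound of $N$ edges for any mimicking network), and your ``private edge'' sufficient condition is exactly how the paper certifies $\rank{A_{G,c}} \geq 2^{k-2}$, via an identity submatrix indexed by the last-layer edges. So the framework is sound and matches the paper's.

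The genuine gap is that the construction --- which is where all of the paper's work lies --- is not actually given. You propose a recursive doubling (two copies of $G_{k-1}$ inside/outside a separating cycle plus a selector gadget) and then explicitly defer ``the correctness of the recursive step'' as the main obstacle; but that step is precisely the content of the theorem. Concretely: (i) you never specify the gadget or the weights, nor verify that for all $2^{k-2}$ bipartitions simultaneously the intended cut is the unique optimum and no cheaper cut crosses between the copies --- the paper handles the analogous issue with a global, non-recursive construction (in the dual: a complete binary tree plus an apex, planarized by crossing vertices) and a hierarchical weight scheme in which each $c_i$ exceeds the total weight of all strictly lighter edges and $C$ exceeds everything, which is what makes the layer-by-layer uniqueness induction of Lemma~\ref{lem:uniquecuts} go through; nothing in your sketch plays this role. (ii) The recursion itself is not obviously consistent: if the two copies share the control terminals but one copy is drawn inside and the other outside a separating cycle, the shared terminals must be reachable from both copies without crossings, and your terminal bookkeeping (fresh hubs in each copy, one new control) does not visibly produce a $k$-terminal planar graph. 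Until a concrete planar graph with concrete weights is exhibited and the simultaneous uniqueness of the $2^{k-2}$ minimum cuts is proved, the statement is not established; as written, the proposal proves only the reduction, not the theorem.
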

This nearly matches the upper bound of $\Oh(k2^{2k})$ of  Krauthgamer and Rika~\cite{KrauthgamerR17}
and is in sharp contrast with the polynomial bounds when the terminals lie on a constant
number of faces~\cite{GoranciHP17,KrauthgamerR17}.
Note that it also nearly matches the improved bound of $\Oh(\gamma 2^{2\gamma} k^4)$ for terminals on $\gamma$ faces~\cite{KrauthgamerR17},
as $k$ terminals lie on at most $k$ faces.

As a side result, we also show a hard instance for mimicking networks in general graphs.
\begin{theorem}\label{thm:side}
For every integer $k \geq 1$ that is equal to $6$ modulo $8$, there
exists a graph $G$ with a set $Q$ of $k$ terminals 
and $\Omega(2^{\binom{k-1}{\lfloor (k-1)/2 \rfloor} - k/2})$ vertices,
    such that no two vertices can be identified without strictly increasing the size of some minimum $S$-separating cut.
\end{theorem}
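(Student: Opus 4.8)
The plan is to build $G$ so that nearly every admissible ``type'' of a non-terminal vertex is realized by a distinct, genuinely \emph{forced} vertex. Recall from Hagerup et al.~\cite{HagerupKNR98} that two vertices $u,v$ of $G$ may be identified without strictly increasing any minimum $S$-separating cut precisely when, for every partition $Q=S\uplus\bar S$, $u$ and $v$ lie on the same side of \emph{every} minimum $S$-separating cut; it suffices to look at the canonical (say, source-side inclusion-minimal) one. The map sending $S$ to the indicator that $v$ is on the $S$-side of this canonical cut is a monotone set function respecting the obvious constraints on terminals, and the upper bounds of Khan and Raghavendra~\cite{KhanR14} and of Chambers and Eppstein~\cite{ChambersE13} count the possible ``types'' by a Sperner/Dilworth estimate on antichains in the Boolean lattice on $k-1$ elements, obtaining $2^{\binom{k-1}{\lfloor(k-1)/2\rfloor}}$. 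Accordingly I aim to realize all but a $2^{\Theta(k)}$ fraction of the antichains in the middle layer $\binom{Q\setminus\{q\}}{\lfloor(k-1)/2\rfloor}$, for a fixed ``anchor'' terminal $q$, as types of forced vertices; since $k$ is even, this middle layer is exactly $\{S\setminus\{q\}:|S|=k/2,\ q\in S\}$, so the relevant partitions are the balanced ones containing $q$.

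The construction is a weighted realization of OR-of-AND formulas. First put a clique of enormous uniform weight $W$ on the terminal set $Q$, so that in every minimum $S$-separating cut the terminals are pinned to their prescribed sides and the cut reduces to placing the remaining vertices optimally. For each middle-layer set $A$ add one \emph{AND-vertex} $g_A$, joined to the terminals of $A$, to those of $Q\setminus(A\cup\{q\})$, and to $q$ with a carefully tuned triple of small weights, chosen so that $g_A$ lands on the $S$-side of the balanced min-cut precisely when all of $A$ lies on the $S$-side --- which, on the balanced layer, means precisely when $S\setminus\{q\}=A$. For each antichain $\mathcal A$ in the middle layer (discarding $O(k)$ degenerate ones) add one \emph{OR-vertex} $v_{\mathcal A}$, joined with a minuscule weight $\varepsilon$ to $g_A$ for every $A\in\mathcal A$ and with a suitably tuned weight to the anchor $q$; a short computation shows that $v_{\mathcal A}$ then lands on the $S$-side iff some such $g_A$ does, i.e.\ iff $S\setminus\{q\}\in\mathcal A$. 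A single $g_A$ is shared by all OR-vertices whose antichain contains $A$, so no gadget vertex is duplicated. The weights are chosen with well-separated scales --- $W$ dominating the total weight of everything else, and each AND-vertex's terminal weights dominating the total weight of all OR-incident edges --- which guarantees that no minimum cut can profit by displacing any terminal or any $g_A$, so the placements above are strictly forced in every canonical minimum cut.

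Granting this, the lower bound is immediate: for antichains $\mathcal A\neq\mathcal B$ choose $A^\star\in\mathcal A\mathbin{\triangle}\mathcal B$, say $A^\star\in\mathcal A\setminus\mathcal B$, and take $S=A^\star\cup\{q\}$; in \emph{every} minimum $S$-separating cut, $v_{\mathcal A}$ is on the $S$-side and $v_{\mathcal B}$ on the $\bar S$-side, so identifying them strictly increases that cut. Easier special cases of the same argument separate every remaining pair --- two terminals, two AND-vertices, or a mixed pair --- once the $O(k)$ degenerate antichains (those whose up-set coincides with the type of a terminal or of some $g_A$, such as the singletons and the empty antichain) have been removed. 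Counting middle-layer antichains then gives $\Omega(2^{\binom{k-1}{\lfloor(k-1)/2\rfloor}-k/2})$ surviving vertices; I expect the $2^{-k/2}$ slack and the congruence $k\equiv 6\pmod 8$ to come not from the $O(k)$ discarded antichains (which cost only an exponent decrement) but from the need to additionally fix the behaviour of $\mathcal A$ on a linear number of ``awkward'' middle-layer sets so that the small-weight calibration can be made consistent, together with a parity constraint hidden in that calibration. That calibration is where I expect the real work to lie: one must pick the small weights so that, \emph{simultaneously} over all exponentially many balanced partitions, each $g_A$ and each $v_{\mathcal A}$ is strictly --- not merely weakly --- on its intended side while the clique keeps the terminals rigid, and making this multi-scale tie-breaking argument watertight, rather than the combinatorics of antichains, will be the technical heart of the proof.
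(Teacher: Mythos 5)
Your architecture is essentially the one the paper uses: a first layer of ``AND'' vertices indexed by the middle layer $\binom{Q_0}{r}$ (the paper's $u_S$, attached to the $k-1$ ordinary terminals with two weight classes $\alpha$ and $(1+\frac1r+\frac1{r^2})\alpha$), an extra anchor terminal ($x$), and a second layer of vertices each attached by unit-scale edges to a subset of the first layer and to the anchor with weight one unit less than the size of that subset (the paper takes all subsets $Z$ of size exactly $\ell/2$ and anchor weight $\ell/2-1$; you take arbitrary subsets/antichains $\mathcal{A}$ with anchor weight $(|\mathcal{A}|-1)\varepsilon$). The genuine gap is that the statement you need about this construction --- that for every important partition, in \emph{every} minimum $S$-separating cut each gadget vertex is strictly forced to its intended side --- is only asserted, and you yourself defer it as ``where I expect the real work to lie.'' That forcing claim is exactly the paper's Lemma~\ref{lem:dblexp}: it requires exhibiting concrete weights (no ``carefully tuned triple'' is given in your sketch), a scale separation such as $\alpha > r^2\ell|W|$, and the balance computation showing $\beta(u_{S_0})=\frac{1}{r}\alpha$ while $\beta(u_{S'})\le -\frac{r+2}{r^2}\alpha$ for $S'\neq S_0$, which is what pins the first layer before the second layer's $\pm 1$ comparison can be made. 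Without that lemma the proposal is a correct blueprint, not a proof. (The huge terminal clique, incidentally, is unnecessary: terminals are on prescribed sides by the definition of an $S$-separating cut.)

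A secondary sign that the quantitative part was not carried out is your account of the loss term and the congruence. In the paper the exponent $-k/2$ comes simply from $|W|=\binom{\ell}{\ell/2}=\Theta(2^{\ell}/\sqrt{\ell})$ with $\sqrt{\ell}\le 2^{k/2}$, and $k\equiv 6\pmod 8$ is needed only so that $\ell=\binom{k-1}{\lfloor (k-1)/2\rfloor}$ is even (Lemma~\ref{lem:ell-even}), making the half-size family and the weight $\ell/2-1$ well defined; there are no ``awkward middle-layer sets'' to fix and no hidden parity in the calibration. In fact, if you completed your variant with arbitrary subsets and anchor weight $(|\mathcal{A}|-1)\varepsilon$, the same two-case computation goes through with strict inequalities for every $|\mathcal{A}|\ge 2$, so you would get roughly $2^{\ell}$ pairwise non-mergeable vertices with no congruence condition at all --- a slightly stronger statement than the theorem. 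That your sketch instead reverse-engineers the stated bound by postulating constraints it does not need indicates the counting and calibration were guessed rather than derived; supplying the explicit weights and the forcing argument (the analogue of Lemma~\ref{lem:dblexp}), plus the short check that terminals, AND-vertices, and the degenerate singleton/empty subsets are handled, is what remains to turn this into a proof.
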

The example of Theorem~\ref{thm:side}, obtained by iterating the construction of Krauthgamer and Rika~\cite{KrauthgamerR13},
shows that the doubly exponential bound
is natural for the preprocessing step of Hagerup et al~\cite{HagerupKNR98}, and
one needs different techniques to improve upon it.
Note that the bound of Theorem~\ref{thm:side} is very close to the upper bound given
by~\cite{ChambersE13,KhanR14}.

\myparagraph{Related work.}
Apart from the aforementioned work on mimicking
networks~\cite{GoranciHP17,HagerupKNR98,KhanR14,KrauthgamerR13,KrauthgamerR17},
there has been substantial work on preserving cuts and flows approximately,
see e.g.~\cite{robi-apx,robi-old,mm-sparsifiers}.
If one wants to construct mimicking networks for vertex cuts in
unweighted graphs with deletable terminals (or with small integral
weights), the representative sets approach of Kratsch and Wahlstr\"{o}m~\cite{KratschW12}
provides a mimicking network with $\Oh(k^3)$ vertices, improving upon a previous
quasipolynomial bound of Chuzhoy~\cite{Chuzhoy12}.


\medskip

We prove Theorem~\ref{thm:main} in Section~\ref{sec:main}
and show the example of Theorem~\ref{thm:side} in Section~\ref{sec:side}.
%

\section{Exponential lower bound for planar graphs}\label{sec:main}
In this section we present the main result of the paper.
We provide a construction that proves that there are planar graphs with $k$ terminals whose mimicking networks are of size $\Omega(2^k)$. 

In order to present the desired graph, for the sake of simplicity, we describe its dual graph $(\du{G},\du{c})$. We let $\du{\tf}=\{\du{f_n},\du{f_s},\du{f_1},\du{f_2},\dots,\du{f_{k-2}} \}$ be the set of faces in $\du{G}$ corresponding to terminals in the primal graph $\duu{G}$.%
\footnote{Since the argument mostly operates on the dual graph, for notational simplicity,
  we use regular symbols  for objects in the dual graph, e.g., $G$, $c$, $f_i$,
  while starred symbols refer to the dual of the dual graph, that is, the primal graph.}
There are two special terminal faces $\du{f_n}$ and $\du{f_s}$, referred to as the north face and the south face. The remaining faces of $\du{\tf}$ are referred to as equator faces.

A set $\du{\tfs} \subset \du{\tf}$ is \emph{important} if $\du{f_n} \in \du{\tfs}$ and $\du{f_s} \notin \du{\tfs}$. Note that there are $2^{k-2}$ important sets; in what follows we care only
about minimum cuts in the primal graph for separations between important sets and their complements.
For an important set $\du{\tfs}$, 
we define its \emph{signature} as a bit vector $\sign{\du{\tfs}} \in \bitv{|\du{\tf}|-2}$ whose $i$'th position is defined as $\sign{\du{\tfs}}[i]= 1 \text{ iff } \du{f_{i}} \in \du{\tfs}$. 
Graph $\du{G}$ will be composed of $2^{k-2}$ cycles referred to as important cycles, each corresponding to an important subset $\du{\tfs} \subset \du{\tf}$.
A cycle corresponding to $\du{\tfs}$ is referred to as $\cycle_{\sign{\du{\tfs}}}$ and it separates $\du{\tfs}$ from $\overline{\du{\tfs}}$.
Topologically, we draw the equator faces on a straight horizontal line that we call the equator. We put the north face $\du{f_n}$ above the equator and the south face $\du{f_s}$ below the equator. For any important $\du{\tfs} \subset \du{\tf}$, in the plane drawing of $\du{G}$ the corresponding cycle $\cycle_{\sign{\du{\tfs}}}$ is a curve that goes to the south of $\du{f_i}$ if $\du{f_i} \in \du{\tfs}$ and otherwise to the north of $\du{f_i}$. We formally define important cycles later on, see Definition~\ref{def:impcyc}.

We now describe in detail the construction of $\du{G}$.
We start with a graph $H$ that is almost a tree, and then embed $H$ in the plane
with a number of edge crossings, introducing a new vertex on every edge crossing.
The graph $H$ consists of a complete binary tree of height $k-2$ with root $v$ and an extra vertex
$w$ that is adjacent to the root $v$ and every one of the $2^{k-2}$ leaves of the tree.
In what follows, the vertices of $H$ are called \emph{branching vertices}, contrary
to \emph{crossing vertices} that will be introduced at edge crossings
in the plane embedding of $H$.

To describe the plane embedding of $H$, we need to introduce some notation of the vertices
of $H$.
The starting point of our construction is the edge $\du{e}=\{ \du{w}, \du{v} \}$.
Vertex $\du{v}$ is the first branching vertex and also the root of $H$.
In vertex $\du{v}$, edge $\du{e}$ branches into $\du{e_0}=\{\du{v},\du{v_0}\}$ and $\du{e_1}=\{\du{v},\du{v_1} \}$. Now $\du{v_0}$ and $\du{v_1}$ are also branching vertices.
The branching vertices are partitioned into layers $L_0,\ldots,L_{k-2}$. Vertex $\du{v}$ is in layer $L_0=\{ \du{v} \}$, while $\du{v_0}$ and $\du{v_1}$ are in layer $L_1=\{ \du{v_0}, \du{v_1} \}$.  Similarly, we partition edges into layers $\mathcal{E}^H_0,\ldots \mathcal{E}^H_{k-1}$. So far we have $\mathcal{E}^H_0=\{ \du{e} \}$ and $\mathcal{E}^H_1=\{ \du{e_0}, \du{e_1} \}$. 

The construction continues as follows. For any layer $L_i, i \in \{1, \ldots , k-3 \}$, all the branching vertices of $L_i=\{ \du{v_{00 \ldots 0}} \ldots \du{v_{11 \ldots 1}} \}$ are of degree $3$. In a vertex $\du{v_a} \in L_i$, $a \in \bitv{i}$, edge $\du{e_a} \in \mathcal{E}^H_i$ branches into edges $\du{e_{0a}}=\{ \du{v_a}, \du{v_{0a}} \},\du{e_{1a}}=\{ \du{v_a}, \du{v_{1a}} \} \in \mathcal{E}^H_{i+1}$, where $\du{v_{0a}},\du{v_{1a}} \in L_{i+1}$. We emphasize here that the new bit in the index is added \emph{as the first symbol}. 
Every next layer is twice the size of the previous one, hence $|L_i|=|\mathcal{E}^H_i|=2^i$. Finally the vertices of $L_{k-2}$ are all of degree $2$. Each of them is connected to a vertex in $L_{k-3}$ via an edge in $\mathcal{E}^H_{k-2}$ and to the vertex $w$ via an edge in $\mathcal{E}^H_{k-1}$.

We now describe the drawing of $H$, that we later make planar by adding crossing vertices, in order to obtain the graph $G$.
As we mentioned before, we want to draw equator faces $\du{f_1}, \ldots \du{f_{k-2}}$ in that order from left to right on a horizontal line (referred to as an equator). Consider equator face $\du{f_i}$ and vertex layer $L_i$ for some $i>0$. Imagine a vertical line through $\du{f_i}$ perpendicular to the equator, and let us refer to it as an $i$'th meridian. We align the vertices of $L_i$ along the $i$'th meridian, from the north to the south. We start with the vertex of $L_i$ with the (lexicographically) lowest index, and continue drawing vertices of $L_i$ more and more to the south while the indices increase. Moreover, the first half of $L_i$ is drawn to the north of $\du{f_i}$, and the second half to the south of $\du{f_i}$.
Every edge of $H$, except for $e$, is drawn as a straight line segment connecting its endpoints.
The edge $\du{e}$ is a curve encapsulating the north face $\du{f_n}$ and separating it from $\du{f_s}$-the outer face of $\du{G}$.

\begin{figure}[t]
\begin{center}
\begin{tikzpicture}[scale=0.9]

\node[circle, fill, label=above:$\du{v_0}$] at (0,1) {}; 
\node[circle, fill, label=above:$\du{v_1}$] at (0,-1) {};
\node[circle, fill, label=above:$\du{v_{01}}$] at (2.2,1) {}; 
\node[circle, fill, label=above:$\du{v_{00}}$] at (2.2,2) {}; 
\node[circle, fill, label=above:$\du{v_{10}}$] at (2.2,-1) {}; 
\node[circle, fill, label=above:$\du{v_{11}}$] at (2.2,-2) {};
\node[circle, fill, label=above:$\du{v_1}$] at (0,-1) {};
\foreach \i [evaluate=\i as \p using ((2)^(\i)), evaluate=\i as \q using int(1+\i)] in {0,...,2}{
	\foreach \j [evaluate=\j as \t using \j + \p] in {-\p,...,-1}{
		\node[circle,fill] at (2.2*\i, \j) {};	
	}
	\foreach \j [evaluate=\j as \t using \j + \p-1] in {1,...,\p}{
		\node[circle,fill] at (2.2*\i, \j) {};	
	}
	\node[diamond] at (2.2*\i, 0) {$\du{f_{\q}}$};
}
\node[circle, fill, label=left:{$\du{v}$}] (-1;0) at (-2.2, 0) {};

\draw (2.2*0, 1) -- (2.2*1,2) node [midway, above, sloped] {$\du{e_{00}}$};
\draw (2.2*1, 2) -- (2.2*2,4) node [midway, above, sloped] {$\du{e_{000}}$};

\foreach \i [evaluate=\i as \p using (2)^(\i)] in {0, ...,1}{
	\foreach \j in {-\p, ..., -1}{
		\draw (2.2*\i,\j) -- ({2.2*(\i+1)},\j-\p);
		\draw (2.2*\i,\j) -- ({2.2*(\i+1)},\j+\p+1);
	}
	\draw (2.2*0,-1) -- (2.2,-2) node [midway, below,sloped] {$\du{e_{11}}$};
	\draw (2.2*1,-2) -- (2.2*2,-4) node [midway, below,sloped] {$\du{e_{111}}$};
	\foreach \j in {1, ..., \p}{
		\draw (2.2*\i, \j) -- ({2.2*(\i+1)}, \j-\p-1);
		\draw (2.2*\i, \j) -- ({2.2*(\i+1)}, \j+\p);
	}
	\draw (-2.2, 0) -- (0, 1) node [midway,above, sloped] {$\du{e_0}$};
	\draw (-2.2, 0) -- (0, -1) node [midway,below=1pt] {$\du{e_1}$};
	\foreach \j [evaluate=\j as \t using \j + 8]in {-8,...,-1}{
		\node[circle,fill] at (2.2*3.5, \j*.65) {};	
	}
	\foreach \j [evaluate=\j as \t using \j + 8-1] in {1,...,8}{
		\node[circle,fill] at (2.2*3.5, \j*.65) {};	
	}
	\node[diamond] at (2.2*3.5+0.1, 0) {$\du{f_{k-2}}$};
	\node[circle, fill, label=right:{$\du{w}$}] at (2.2*5.5, 0) {};
	
	\foreach \j [evaluate=\j as \t using \j + 8]in {-8,...,-1}{
	}
	\foreach \j [evaluate=\j as \t using \j + 8-1] in {1,...,8}{
	}
	\foreach \j in {-4, ..., -1}{
		\draw plot [smooth] coordinates {(2.2*2, \j) ({2.2*3.5}, {(\j-4)*.65}) (2.2 * 5.5, 0)};
		\draw plot [smooth] coordinates {(2.2*2, \j) ({2.2*3.5}, {(\j+5)*.65}) (2.2 * 5.5, 0)};
	}
	\foreach \j in {1, ..., 4}{
		\draw plot [smooth] coordinates {(2.2*2, \j) ({2.2*3.5}, {(\j+4)*.65}) (2.2 * 5.5, 0)};
		\draw plot [smooth] coordinates {(2.2*2, \j) ({2.2*3.5}, {(\j-4-1)*.65}) (2.2 * 5.5, 0)};
	}
	\draw (5+1.1*6.5, 0) arc (0:180:1.1*6.5) node [midway, above=2pt] {$\du{e}$};
	\node at (2.2*6.5/2-4, 2.2*6.5/2-2) {$\du{f_n}$};
	\node at (2.2*6.5/2-8, 2.2*6.5/2-1) {$\du{f_s}$};
	\draw[dashed] (-2.2, 0) -- (-2.2, -6) node [right] {$L_0$};
	\foreach \i[evaluate=\i as \ii using int(\i+1)]in {0,...,2}{
		\draw[dashed] (\i*2.2,-2^\i) -- (\i * 2.2, -6) node [right] {$L_{\ii}$};
	}
	\foreach \i in {0, ..., 3}{
		\node at (2.2*\i-3.2, -5) {$\mathcal{E}_{\i}$};
	}
}

\end{tikzpicture}
\caption{The graph $\du{G}$.\label{dual}}
\end{center}
\end{figure}

The crossing vertices are added whenever the line segments cross. This way the edges of $H$
are subdivided and the resulting graph is denoted by $\du{G}$.
This completes the description of the structure and the planar drawing of $\du{G}$.
We refer to Figure~\ref{dual} for an illustration of the graph $G$.
The set $\mathcal{E}_i$ consists of all edges of $G$ that are parts of the (subdivided) edges of $\mathcal{E}^H_i$ from $H$, see Figure~\ref{subdivide}.
We are also ready to define important cycles formally.

\begin{figure}[t]
\centering
\begin{tikzpicture}[scale=0.7]

{\tiny
\draw[draw=white] (4 + 0, 1) -- (4 + 6, 5*0.8) node [pos=0.07, above,sloped] {$\du{e^{1}_{0a}}$};
\draw[draw=white] (4 + 0, 1) -- (4 + 6, 5*0.8) node [pos=0.2, above,sloped] {$\du{e^{2}_{0a}}$};
\draw[draw=white] (4 + 0, 1) -- (4 + 6, 5*0.8) node [pos=0.3, above,sloped] {$\cdots$};
\draw[draw=white] (4 + 0, 1) -- (4 + 6, 5*0.8) node [pos=0.5, above,sloped] {$\du{e^{\mathrm{dec}(a)}_{0a}}$};

\draw[draw=white] (4 + 0, 4) -- (4 + 6, -1*0.8) node [pos=0.1, above, sloped, text=blue] {$c_{i+1}$};
\draw[draw=white] (4 + 0, 4) -- (4 + 6, -1*0.8) node [pos=0.23, above, sloped, text=blue] {$c_{i+1}$};

\draw[draw=white] (4 + 0, 1) -- (4 + 6, -4*0.8) node [pos=0.1, above, sloped] {$\du{e^{1}_{1a}}$};
\draw[draw=white] (4 + 0, 1) -- (4 + 6, -4*0.8) node [pos=0.3, above, sloped] {$\du{e^{2}_{1a}}$};
\draw[draw=white] (4 + 0, 1) -- (4 + 6, -4*0.8) node [pos=0.45, above, sloped] {$\cdots$};
\draw[draw=white] (4 + 0, 1) -- (4 + 6, -4*0.8) node [pos=0.7, above, sloped] {$\du{e^{2^i-\mathrm{dec}(a)}_{1a}}$};
\foreach \i in {1,...,4}{
	\node[circle,fill] at (4 + 0, \i) {};
}
\foreach \i in {1,...,4}{
	\node[circle,fill] at (4 + 0, -\i) {};
}
\foreach \i in {1,...,8}{
	\node[circle,fill] at (4 + 6, \i*0.8) {};
}
\foreach \i in {1,...,8}{
	\node[circle,fill] at (4 + 6, -\i*0.8) {};
}
\foreach \i in {1, ..., 4} {
	\draw (4 + 0, \i) -- (4 + 6,{(\i+4)*0.8}) node [pos=0.9, above, text=blue] {$C$};
}
\foreach \i in {1, ..., 4} {
	\draw (4 + 0, -\i) -- (4 + 6,{(5-\i)*0.8}) node [pos=0.9,above, text=blue] {$C$};
}

\foreach \i in {1, ..., 4} {
	\draw (4 + 0, -\i) -- (4 + 6,{-(\i+4)*0.8}) node [pos=0.9, below, text=blue] {$C$};
}
\foreach \i in {1, ..., 4} {
	\draw (4 + 0, \i) -- (4 + 6,{-(5-\i)*0.8}) node [pos=0.9,below, text=blue] {$C$};
}

\draw[dashed] (4 + 0, -4) -- (4 + 0, -8) node[right] {$L_i$};
\draw[dashed] (4 + 6, -8*0.8) -- (4 + 6, -8) node[right] {$L_{i+1}$};
\node[circle] at (4 + 3, -7) {$\mathcal{E}_{i+1}$};

\node[label=right:{$\du{v_{0a}}$}] at (4 + 6, 5*0.8) {};
\node[label=left:{$\du{v_{a}}$}] at (4 + 0, 1) {};
\node[label=right:{$\du{v_{1a}}$}] at (4 + 6, -3) {};

}

\end{tikzpicture}
\caption{The layer $\mathcal{E}_{i+1}$.
  The vertex and edge names are black, their weights are blue.\label{subdivide}}
\end{figure}
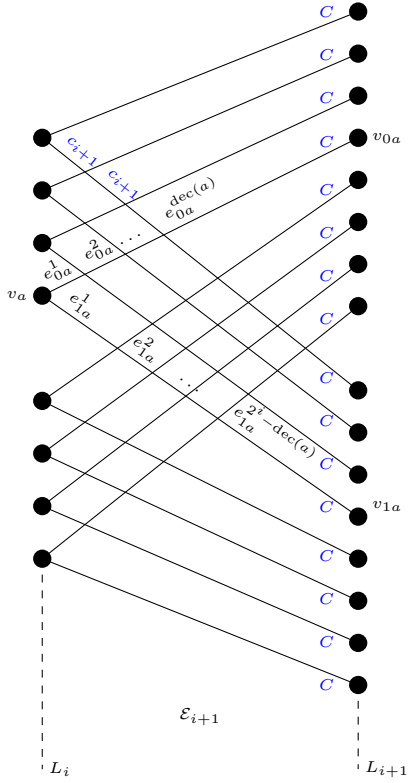

\begin{definition}\label{def:impcyc}
Let $\du{\tfs} \subset \du{\tf}$ be important.
Let $\pi$ be a unique path in the binary tree $H-\{w\}$ from the root
$\du{v}$ to $\du{v_{\rev{\sign{\du{\tfs}}}}}$, 
where $\rev{\cdot}$ operator reverses the bit vector.
Let $\pi'$ be the path in $G$ corresponding to $\pi$.
The important cycle $\cycle_{\sign{\du{\tfs}}}$ is composed of $\du{e}$, $\pi'$, and an edge in $\mathcal{E}_{k-1}$ adjacent to $\du{v_{\rev{\sign{\du{\tfs}}}}}$.  
\end{definition}

We now move on to describing how weights are assigned to the edges of $\du{G}$. 
The costs of the edges in $\du{G}$ admit $k-1$ values: $c_1, c_2, \ldots c_{k-2}$, and $C$. Let $c_{k-2}=1$. For $i \in \{1 \dots k-3 \}$ let $c_i= \sum_{j=i+1}^{k-2}|\mathcal{E}_{j}|c_{j}$.
Let $C=\sum_{j=1}^{k-2} |\mathcal{E}_i|c_i$. Let us consider an arbitrary edge $\du{e_{ba}}=\{ \du{v_{a}}, \du{v_{ba}} \}$ for some $a \in \bitv{i}, i \in \{ 0 \ldots k-3 \}, b \in \{ 0,1 \}$ (see Figure~\ref{subdivide} for an illustration). As we mentioned before, $\du{e_{ba}}$ is subdivided by crossing vertices into a number of edges. If $b=0$, then edge $\du{e_{ba}}$ is subdivided by\footnote{For a bit vector $a$, $\dec{a}$ denotes the integral value of $a$ read as a number in binary.} $\dec{a}$ crossing vertices into $\dec{a}+1$ edges: $\du{e^1_{ba}}=\{ \du{v_a}, \du{x^1_{ba}} \}, \du{e^2_{ba}}=\{ \du{x^1_{ba}},\du{x^2_{ba}} \} \ldots \du{e^{\dec{a}+1}_{ba}}=\{ \du{x^{\dec{a}}_{ba}}, \du{v_{ba}} \}$. Among those edges $\du{e^{\dec{a}+1}_{ba}}$ is assigned cost $C$, and the remaining edges subdividing $\du{e_{ba}}$ are assigned cost $c_i$. Analogically, if $b=1$, then edge $\du{e_{ba}}$ is subdivided by $2^i-1-\dec{a}$ crossing vertices into $2^i-\dec{a}$ edges: $\du{e^1_{ba}}=\{ \du{v_a}, \du{x^1_{ba}} \}, \du{e^2_{ba}}=\{ \du{x^1_{ba}},\du{x^2_{ba}} \} \ldots \du{e^{2^i-\dec{a}}_{ba}}=\{ \du{x^{2^i-1-\dec{a}}_{ba}}, \du{v_{ba}} \}$. Again, we let edge $\du{e^{2^i-\dec{a}}_{ba}}$ have cost $C$, and the remaining edges subdividing $\du{e_{ba}}$ are assigned cost $c_i$.
Finally, all the edges connecting the vertices of the last layer with $w$ have weight $c_{k-2} = 1$.
The cost assignment within an edge layer is presented in Figure~\ref{subdivide}. 

This finishes the description of the dual graph $G$. We now consider the primal graph $\duu{G}$ with the set of terminals $\duu{\tf}$ consisting of the
$k$ vertices of $\duu{G}$ corresponding to the faces $\tf$ of $G$. In the remainder of this section we show that there is a cost function on the edges of $\duu{G}$, under which any mimicking network for $\duu{G}$
contains at least $2^{k-2}$ edges. This cost function is in fact a small perturbation of the edge costs implied by the dual graph $G$.

In order to accomplish this,
we use the framework introduced in~\cite{KrauthgamerR13}. In what follows,
   $\mincut{G}{c}{S}{S'}$ stands for the minimum cut separating $S$ from $S'$ in a graph $G$ with cost function $c$. Below we provide the definition of the cutset-edge incidence matrix and the Main Technical Lemma from~\cite{KrauthgamerR13}. 

\begin{definition}[Incidence matrix between cutsets and edges] Let $(G,c)$ be a $k$-terminal network, and fix an enumeration $S_1, \ldots S_m$ of all $2^{k-1}-1$ distinct and nontrivial bipartitions $Q=S_i \cup \overline{S}_i$. The cutset-edge incidence matrix of $(G,c)$ is the matrix $A_{G,c} \in \{ 0,1 \}^{m \times E(G)}$ given by
$$
(A_{G,c})_{i,e}=
\begin{cases}
1 \text{ if } e \in \mincut{G}{c}{S_i}{\overline{S}_i}\\
0 \text{ otherwise.}
\end{cases}
$$
\end{definition}

\begin{lemma}[Main Technical Lemma of \cite{KrauthgamerR13}]\label{lem:mtl}
Let $(G,c)$ be a $k$-terminal network. Let $A_{G,c}$ be its cutset-edge incidence matrix, and assume that for all $S \subset Q$ the minimum $S$-separating cut of $G$ is unique. Then there is for $G$ an edge cost function $\tilde{c}: E(G) \mapsto \R^+$, under which every mimicking network $(G',c')$ satisfies $|E(G')| \geq \rank{A_{G,c}}$. 
\end{lemma}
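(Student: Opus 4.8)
The plan is to take $\tilde c$ to be a generic small perturbation of $c$ and then to read off the bound from a dimension count in the space of cut-value vectors. Since every minimum $S$-separating cut of $(G,c)$ is unique, each of the $m=2^{k-1}-1$ values $\mincut{G}{c}{S_i}{\overline{S}_i}$ is attained by a single cutset $T_i$, strictly below the cost of every other $S_i$-separating cut; being a conjunction of finitely many strict linear inequalities in the edge costs, this situation persists on an open neighbourhood $N\subseteq(\R^+)^{E(G)}$ of $c$. Hence for every $\tilde c\in N$ we still have $A_{G,\tilde c}=A_{G,c}$, and moreover $\mincut{G}{\tilde c}{S_i}{\overline{S}_i}=\sum_{e\in T_i}\tilde c(e)$ \emph{exactly}, so the vector of cut values equals $A_{G,c}\,\tilde{\mathbf c}$, a linear function of the cost vector $\tilde{\mathbf c}\in\R^{E(G)}$. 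As $\tilde c$ ranges over the full-dimensional open set $N$, this vector ranges over an open subset of the column space of $A_{G,c}$, a set of dimension exactly $r:=\rank{A_{G,c}}$.

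Next I would bound the dimension of the set of cut-value vectors realizable by a \emph{small} network. Fix an abstract $k$-terminal (multi)graph $G'$ with at most $r-1$ edges. For each $i$ there is a fixed finite family $\mathcal{T}_i$ of $S_i$-separating cutsets of $G'$, and $\mincut{G'}{c'}{S_i}{\overline{S}_i}=\min_{T\in\mathcal{T}_i}\sum_{e\in T}c'(e)$ is a minimum of finitely many linear functionals of $c'\in(\R^+)^{E(G')}$; consequently the map $c'\mapsto\big(\mincut{G'}{c'}{S_1}{\overline{S}_1},\dots,\mincut{G'}{c'}{S_m}{\overline{S}_m}\big)$ is piecewise linear, so its image has dimension at most $|E(G')|\le r-1$. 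Deleting isolated non-terminal vertices (which affect nothing), any such $G'$ has at most $2(r-1)+k$ vertices, so up to relabelling there are only finitely many possibilities for $G'$; the union over all of them of the realizable cut-value vectors is therefore a finite union of sets of dimension at most $r-1$, hence itself of dimension at most $r-1$.

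Finally I would combine the two. If for \emph{every} $\tilde c\in N$ the network $(G,\tilde c)$ admitted a mimicking network $(G',c')$ with $|E(G')|\le r-1$, then, since a mimicking network preserves every value $\mincut{G}{\tilde c}{S_i}{\overline{S}_i}$, the vector $A_{G,c}\,\tilde{\mathbf c}$ would be realizable by $G'$ and hence would lie in the $(r-1)$-dimensional set of the previous paragraph; letting $\tilde c$ vary over $N$ we would conclude that an $r$-dimensional set is contained in an $(r-1)$-dimensional one, which is impossible. Therefore some $\tilde c^\star\in N$ has the property that every mimicking network for $(G,\tilde c^\star)$ has at least $r=\rank{A_{G,c}}$ edges, and this $\tilde c^\star$ is the cost function the lemma asks for. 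The main point to nail down is the handling of ``dimension'': I would phrase everything via piecewise-linear (or semialgebraic) images, using that the image of an open subset of $\R^d$ under such a map has dimension at most $d$, that dimension is monotone under inclusion, and that a finite union of sets of dimension at most $r-1$ cannot contain a set of dimension $r$. Once this is set up, checking that ``the minimum $S$-separating cut is unique'' is an open condition on $c$, and bounding the number of abstract small networks, are routine.
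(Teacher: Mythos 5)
Your argument is correct. Note first that this paper does not prove Lemma~\ref{lem:mtl} at all: it is imported verbatim from Krauthgamer and Rika~\cite{KrauthgamerR13}, so there is no in-paper proof to match against; what you have written is a self-contained reconstruction. Compared with the original argument of~\cite{KrauthgamerR13}, which also perturbs $c$ inside the open region where the unique minimum cutsets (and hence $A_{G,c}$) are unchanged but then concludes by a direct algebraic genericity step --- choosing the perturbed values so that no nontrivial rational combination of the rows of $A_{G,c}$ annihilates $\tilde{c}$, whence the rows of the mimicking network's cutset-edge matrix corresponding to a row basis of $A_{G,c}$ must themselves be independent, forcing $|E(G')|\ge\rank{A_{G,c}}$ --- your route replaces that linear-independence argument by a dimension/measure count: the cut-value vectors realizable by any fixed small network lie in finitely many linear subspaces of dimension at most $|E(G')|$ (one per linearity region of the piecewise-linear min-cut map), there are finitely many abstract networks with at most $r-1$ edges after discarding isolated non-terminals, and a finite union of subspaces of dimension at most $r-1$ cannot cover the relatively open, $r$-dimensional image $\{A_{G,c}\tilde{c} : \tilde{c}\in N\}$. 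Both proofs are sound; the original is more economical (no enumeration of candidate small networks is needed, since the rank argument applies to an arbitrary mimicking network of the single chosen $\tilde c$), while yours trades the explicit independence condition for a soft genericity argument. The only points to nail down are the ones you already flag: state the ``dimension'' bound as containment in a finite union of linear subspaces (so the covering argument is literally a measure-zero statement inside the column space of $A_{G,c}$), and intersect the neighbourhood $N$ with the open positive orthant so that $\tilde{c}^\star$ is a legitimate positive cost function.
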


Recall that $\duu{G}$ is the dual graph to the graph $\du{G}$ that we constructed.
By slightly abusing the notation, we will use the cost function $c$ defined on the dual edges
also on the corresponding primal edges.
Let $\duu{\tf}=\{ \duu{f_n}, \duu{f_s}, \duu{f_1}, \ldots \duu{f_{k-2}} \}$ be the set of terminals in $\duu{G}$ corresponding to $\du{f_n}, \du{f_s}, \du{f_1}, \ldots \du{f_{k-2}}$ respectively. We want to apply Lemma~\ref{lem:mtl} to $\duu{G}$ and $\duu{\tf}$. For that we need to show that the cuts in $\duu{G}$ corresponding to important sets are unique and that $\rank{A_{\duu{G},c}}$ is high.

\begin{figure}[tb]
\begin{center}
\resizebox{200pt}{200pt}{\begin{tikzpicture}
\foreach \j in {0, ..., 4}{
	\node[circle, fill] at (\j*4, 0) {};
}
\node [circle, fill] at (-1, 2.5*3) {};
\node [circle, fill] at (-1, -2.5*3) {};
{\LARGE \draw (-1.8, 2.5*3) node {$f_n^\ast$};
\draw (-1.8, -2.5*3) node {$f_s^\ast$};
\draw (0.8, 0) node {$f_1^\ast$};
\draw (16.8, 0) node {$f_{k-2}^\ast$};
}
\draw[line width=2pt] (-1, 2.5*3) -- (-1, -2.5*3)node [midway, above, sloped] {$C$};

\foreach \i in {0, ..., 4}{
	\draw (2*\i,2*\i) -- (8 + 2*\i, -8 + 2*\i);
	\draw (2*\i,-2*\i) -- (8+ 2*\i,8 + -2*\i);
	\draw[line width=2pt] (2*\i, -2*\i) -- ({4*(\i)}, 0);
	\draw[line width=2pt] (2*\i, 2*\i) -- ({4*(\i)}, 0);
}

\foreach \i in {0, ..., 3}{
	\draw[line width=2pt] (-1, 2.5*3) -- (2*\i, 2*\i) node [midway, above, sloped] {$C$};
	\draw[line width=2pt] (-1, -2.5*3) -- (2*\i, -2*\i) node [midway, above, sloped] {$C$};
}
\foreach \i in {4}{
	\draw[line width=2pt] (-1, 2.5*3) -- (2*\i, 2*\i) node [midway, above, sloped] {$2C$};
	\draw[line width=2pt] (-1, -2.5*3) -- (2*\i, -2*\i) node [midway, above, sloped] {$2C$};
}

\foreach \i [evaluate=\i as \p using 2^(\i)] in {1, ..., 3}{
	\draw plot coordinates {(2*\i, 2*\i) (2*\i+4 - 2/\p, 2*\i - 2/\p) (4*\i+2 - 2/\p, 2 - 2/\p)};
	\draw plot coordinates {(2*\i, -2*\i) (2*\i+4 -2/\p, -2*\i + 2/\p) (4*\i+2 - 2/\p, -2 + 2/\p)};
  \draw[] (4*\i+2 - 2/\p, 2 - 2/\p) -- (4*\i+4 - 2/\p, 0 - 2/\p);
  \draw[] (4*\i+2 - 2/\p, -2 + 2/\p) -- (4*\i+4 - 2/\p, 0 + 2/\p);
}
\foreach \i [evaluate=\i as \p using 2^(\i)] in {2, ..., 3}{
	\draw plot coordinates {(2*\i + 2, 2*\i - 2) (2*\i+5-4/\p, 2*\i-1-4/\p) (4*\i + 2 -4/\p, 2 -4/\p)};
	\draw (2*\i+3, 2*\i-3) -- (2*\i+5, 2*\i-1);
	\draw plot coordinates {(2*\i + 3, 2*\i - 3) (2*\i + 6 - 6/\p, 2*\i - 2 - 6/\p) (4*\i + 2 - 6/\p, 2 - 6/\p)};

	\draw plot coordinates {(2*\i + 2, -2*\i + 2) (2*\i+5-4/\p, -2*\i+1+4/\p) (4*\i + 2 -4/\p, -2 +4/\p)};
	\draw (2*\i+3, -2*\i+3) -- (2*\i+5, -2*\i+1);
	\draw plot coordinates {(2*\i + 3, -2*\i + 3) (2*\i + 6 - 6/\p, -2*\i + 2 + 6/\p) (4*\i + 2 - 6/\p, -2 + 6/\p)};

  \draw[] (4*\i + 2 -4/\p, 2 -4/\p) -- (4*\i + 4-4/\p, 0-4/\p);
  \draw[] (4*\i + 2 - 6/\p, 2 - 6/\p) -- (4*\i + 4 - 6/\p, 0 - 6/\p);
  \draw[] (4*\i + 2 -4/\p, -2 +4/\p) -- (4*\i + 4 -4/\p, 0 +4/\p);
  \draw[] (4*\i + 2 - 6/\p, -2 + 6/\p) -- (4*\i + 4 - 6/\p, 0 + 6/\p);
}
\foreach \i [evaluate=\i as \p using 2^(\i)] in {3, ..., 3}{
	\foreach \j in {0, ..., 3}{
		\draw (2*\i+4 + \j/2, 2*\i-4 -\j/2)--
		(2*\i + 2 +4 + \j/2 + 1/2 - 8/\p- 2 * \j / \p, 2*\i + 2 -4 -\j/2 - 1/2 - 8/\p - 2 * \j / \p) -- 
		(2*\i + 4 -0.5 +4 + 1/2 - 8/\p- 2 * \j / \p, 2*\i + 0.5 -4 - 1/2 - 8/\p - 2 * \j / \p);
		\draw (2*\i+4 + \j/2 + 1/2, 2*\i-4 -\j/2 - 1/2) -- (2*\i+4 + \j/2 + 1/2 + 2, 2*\i-4 -\j/2 - 1/2 + 2);
    
    \draw[] (2*\i + 4 -0.5 +4 + 1/2 - 8/\p- 2 * \j / \p, 2*\i + 0.5 -4 - 1/2 - 8/\p - 2 * \j / \p) --
    (2*\i + 4 -0.5 +4 + 1/2+2 - 8/\p- 2 * \j / \p, 2*\i + 0.5 -4-2 - 1/2 - 8/\p - 2 * \j / \p);

		\draw (2*\i+4 + \j/2, -2*\i+4 +\j/2)--
		(2*\i + 2 +4 + \j/2 + 1/2 - 8/\p- 2 * \j / \p, -2*\i - 2 +4 +\j/2 + 1/2 + 8/\p + 2 * \j / \p) -- 
		(2*\i + 4 -0.5 +4 + 1/2 - 8/\p- 2 * \j / \p, -2*\i - 0.5 +4 + 1/2 + 8/\p + 2 * \j / \p);
		\draw (2*\i+4 + \j/2 + 1/2, -2*\i+4 +\j/2 + 1/2) -- (2*\i+4 + \j/2 + 1/2 + 2, -2*\i+4 +\j/2 + 1/2 - 2);

    \draw[] (2*\i + 4 -0.5 +4 + 1/2 - 8/\p- 2 * \j / \p, -2*\i - 0.5 +4 + 1/2 + 8/\p + 2 * \j / \p) --
     (2*\i + 4 -0.5 +4 + 1/2+2 - 8/\p- 2 * \j / \p, -2*\i - 0.5 +4 + 1/2+2 + 8/\p + 2 * \j / \p);
	}
}

\foreach \i [evaluate=\i as \p using 2^(\i)] in {1, 2, 3} {

}

\end{tikzpicture}
}
\caption{Primal graph $G^\ast$.\label{primal}}
\end{center}
\end{figure}

As an intermediate step let us argue that the following holds.
\begin{claim}\label{clm:kc}
There are $k$ edge disjoint simple paths in $\duu{G}$ from $\duu{f_n}$ to $\duu{f_s}$: $\pi_0, \pi_1, \ldots, \pi_{k-2}, \pi_{k-1}$. Each $\pi_i$ is composed entirely of edges dual to the edges of $\mathcal{E}_i$ whose cost equals $C$. For $i \in \{ 1 \ldots k-2 \}$, $\pi_i$ contains vertex $\duu{f_i}$. Let $\pi_i^n$ be the prefix of $\pi_i$ from $\duu{f_n}$ to $\duu{f_i}$ and $\pi_i^s$ be the suffix from $\duu{f_i}$ to $\duu{f_s}$. The number of edges on $\pi_i$ is $2^i$, and the number of edges on $\pi_i^n$ and $\pi_i^s$ is $2^{i-1}$. 
\end{claim}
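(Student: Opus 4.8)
The plan is to realise each $\pi_i$ as the $\duu{G}$-path traced out by a simple arc drawn in the plane embedding of $\du{G}$. Recall the standard dictionary: a simple $\duu{f_n}$--$\duu{f_s}$ path in $\duu{G}$ is the same thing as a simple arc from a point of $\du{f_n}$ to a point of $\du{f_s}$ that crosses edges of $\du{G}$ transversally and meets every face of $\du{G}$ at most once; the edges of the path are exactly the duals of the crossed edges, and two such arcs give edge-disjoint paths iff the sets of crossed edges are disjoint. Since $\mathcal{E}_0,\dots,\mathcal{E}_{k-1}$ partition $E(\du{G})$ (each $\mathcal{E}_i$ is the subdivision of $\mathcal{E}^H_i$, and $\mathcal{E}^H_0,\dots,\mathcal{E}^H_{k-1}$ partition $E(H)$), it suffices to exhibit, for each $i$, an arc $\gamma_i$ from $\du{f_n}$ to $\du{f_s}$ crossing only cost-$C$ edges of $\mathcal{E}_i$, each of them exactly once, and passing through $\du{f_i}$ exactly once when $1\le i\le k-2$; edge-disjointness and simplicity of the resulting $\pi_i$ are then automatic.

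First I would record the local structure of the embedding near a meridian. For $1\le j\le k-2$, every edge of $\mathcal{E}^H_j$ is a straight segment running monotonically from its endpoint in $L_{j-1}$ on meridian $j-1$ to its endpoint in $L_j$ on meridian $j$; hence near meridian $j$ the band between meridian $j-1$ and meridian $j$ consists of exactly the $2^j$ ``final'' pieces of these (subdivided) edges, one arriving at each of the $2^j$ vertices of $L_j$, and by construction these final pieces are precisely the cost-$C$ edges of $\mathcal{E}_j$, while the crossing counts $\dec{a}$ resp.\ $2^{j-1}-1-\dec{a}$ confirm that all crossings occur strictly inside the band. Recall also that the first half of $L_j$ lies north of $\du{f_j}$ and the second half south of it, and that $\du{e}$ (hence $\du{f_n}$, the region it bounds) lies to the north of everything, with $\du{f_s}$ the outer face.

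Now I would define the arcs. Take $\pi_0$ to be the single dual edge of $\du{e}$: since $\du{e}\in\partial\du{f_n}\cap\partial\du{f_s}$ this is a path of length $2^0=1$. For $1\le j\le k-2$, let $\gamma_j$ start inside $\du{f_n}$, descend while staying just west of meridian $j$, and end inside $\du{f_s}$. By the local picture, to move from the region north of a vertex $\du{v_c}\in L_j$ to the region south of it, $\gamma_j$ must (and can) cross exactly the final piece of the parent edge of $\du{v_c}$, a cost-$C$ edge of $\mathcal{E}_j$, and between two consecutive vertices of $L_j$ it crosses nothing. Hence $\gamma_j$ crosses exactly the $2^j$ cost-$C$ edges of $\mathcal{E}_j$, once each, and it passes through $\du{f_j}$ precisely between crossing the $2^{j-1}$ pieces landing on the northern half of $L_j$ and the $2^{j-1}$ landing on the southern half; this yields $|\pi_j|=2^j$, $\duu{f_j}\in\pi_j$, and $|\pi_j^n|=|\pi_j^s|=2^{j-1}$. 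Finally $\gamma_{k-1}$ is obtained in the same spirit: it leaves $\du{f_n}$, runs close to $\du{w}$ crossing each cost-$C$ piece of $\mathcal{E}_{k-1}$ once, and enters $\du{f_s}$, the analogous inspection of the faces around $\du{w}$ giving $|\pi_{k-1}|=2^{k-1}$. Each $\gamma_i$ can be drawn as a simple arc visiting every face at most once, so each $\pi_i$ is a simple path; and since $\pi_i$ uses only edges dual to $\mathcal{E}_i$ and the $\mathcal{E}_i$ are pairwise disjoint, the paths $\pi_0,\dots,\pi_{k-1}$ are edge-disjoint.

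The only real work is the local claim used above: that an arc running just west of meridian $j$ meets, between two consecutive vertices of $L_j$, none of the edges of $\du{G}$ other than the two final pieces landing on those vertices — so that it crosses exactly one cost-$C$ edge per vertex of $L_j$ and no other edge at all — together with the matching statement for the faces incident to $\du{w}$ used for $\gamma_{k-1}$. This is exactly where the ``prepend the new bit'' indexing, the north-to-south ordering of $L_j$ along the meridian, and the $\dec{a}$-pattern of the subdivisions are used; but it is a direct reading-off of the embedding just described rather than any computation, and once it is in hand the edge counts, the occurrence of $\duu{f_j}$, simplicity, and edge-disjointness all follow immediately.
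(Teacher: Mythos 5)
Your argument is essentially the paper's own proof in different words: the paper also reads the paths directly off the planar embedding, noting that the cost-$C$ edges of $\mathcal{E}_i$ incident to $L_i$, ordered lexicographically along the $i$-th meridian, have each consecutive pair sharing a face, with the first incident to $\du{f_n}$ and the last to $\du{f_s}$ (and handling $\pi_0$ as the dual of $\du{e}$ and $\pi_{k-1}$ via the edges around $\du{w}$), which is exactly your transversal-arc formulation. The only incidental slip is your count $|\pi_{k-1}|=2^{k-1}$ — the arc around $\du{w}$ crosses the $2^{k-2}$ edges of $\mathcal{E}_{k-1}$ — but the claim's edge-count assertion concerns $i\in\{1,\dots,k-2\}$ and the paper's proof likewise gives no count there, so this does not affect correctness.
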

\begin{proof}The primal graph $\duu{G}$ together with paths $\pi_0, \pi_1 \ldots \pi_{k-2},\pi_{k-1}$ is pictured in Figure~\ref{primal}. The paths $\pi_{k-2},\pi_{k-1}$ visit the same vertices in the same manner, so for the sake of clarity only one of these paths is shown in the picture. This proof contains a detailed description of these paths and how they emerge from in the dual graph $\du{G}$.

Consider a layer $L_i$. Recall that for any $ba \in \bitv{i}$ edge $\du{e_{ba}}$ of the almost tree is subdivided in $\du{G}$, and all the resulting edges are in $\mathcal{E}_i$. If $b=0$, then edge $\du{e_{ba}}$ is subdivided by $\dec{a}$ crossing vertices into $\dec{a}+1$ edges: $\du{e^1_{ba}}=\{ \du{v_a}, \du{x^1_{ba}} \}, \du{e^2_{ba}}=\{ \du{x^1_{ba}},\du{x^2_{ba}} \} \ldots \du{e^{\dec{a}+1}_{ba}}=\{ \du{x^{\dec{a}}_{ba}}, \du{v_{ba}} \}$, where $\du{c}(\du{e^{\dec{a}+1}_{ba}})=C$. Analogically, if $b=1$, then edge $\du{e_{ba}}$ is subdivided by $2^i-1-\dec{a}$ crossing vertices into $2^i-\dec{a}$ edges: $\du{e^1_{ba}}=\{ \du{v_a}, \du{x^1_{ba}} \}, \du{e^2_{ba}}=\{ \du{x^1_{ba}},\du{x^2_{ba}} \} \ldots \du{e^{2^i-\dec{a}}_{ba}}=\{ \du{x^{2^i-1-\dec{a}}_{ba}}, \du{v_{ba}} \}$. Again, $\du{c}(\du{e^{2^i-\dec{a}}_{ba}})=C$. Consider the edges of $\mathcal{E}_i$ incident to vertices in $L_i$. If we order these edges lexicographically by their lower index, then each consecutive pair of edges shares a common face. Moreover, the first edge $\du{e^1_{00\ldots0}}$ is incident to $\du{f_n}$ and the last edge $\du{e^1_{11\ldots1}}$ is incident to $\du{f_s}$. This gives a path $\pi_i$ from $f_n$ to $f_s$ through $f_i$ in the primal graph where all the edges on $\pi_i$ have cost $C$. Path $\pi_{k-1}$ is given by the edges of $\mathcal{E}_{k-1}$ in a similar fashion and path $\pi_0$ is composed of a single edge dual to $\du{e}$. 
\end{proof}

We move on to proving that the condition in Lemma~\ref{lem:mtl} holds.
We extend the notion of important sets $\tfs \subseteq \tf$ to sets $\duu{\tfs} \subseteq \duu{\tf}$
in the natural manner.
\begin{lemma}\label{lem:uniquecuts}
For every important $\duu{\tfs} \subset \duu{\tf}$, the minimum cut separating $\duu{\tfs}$ from $\overline{\duu{\tfs}}$ is unique and corresponds to cycle $\cycle_{\sign{\du{\tfs}}}$ in $\du{G}$.   
\end{lemma}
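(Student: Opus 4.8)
The plan is to pass to the planar dual. By planar duality, a minimum $\duu{\tfs}$-separating cut in $\duu{G}$ is a minimum-total-cost edge set $D\subseteq E(\du{G})$ whose dual in $\duu{G}$ separates the faces of $\du{\tfs}$ from those of $\overline{\du{\tfs}}$, and such a minimum $D$ can be taken to be an edge-disjoint union of simple cycles (the plane boundary between the faces put on $\duu{f_n}$'s side and those put on $\duu{f_s}$'s side). So it suffices to show that $\cycle_{\sign{\du{\tfs}}}$ is the \emph{unique} minimum-cost such separating collection in $(\du{G},c)$. First I would check that $\cycle_{\sign{\du{\tfs}}}$ is a separating collection: by Definition~\ref{def:impcyc} together with the topological description of the drawing it is a simple cycle encircling $\du{f_n}$ (through $\du{e}$), excluding the outer face $\du{f_s}$, and passing south of $\du{f_i}$ exactly when $\du{f_i}\in\du{\tfs}$; hence it realises the partition $(\du{\tfs},\overline{\du{\tfs}})$.

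Next I would bound costs. The cycle $\cycle_{\sign{\du{\tfs}}}$ is $\du{e}$ (cost $C$) together with the subdivisions of the $k-2$ tree edges on its root--leaf path and one edge of $\mathcal{E}_{k-1}$ (cost $1$); each subdivided tree edge lying in a layer $\mathcal{E}_i$ contributes exactly one cost-$C$ edge (at its child end) and at most $2^{i-1}-1$ light edges. Writing $c_0:=C$, so that $c_i=(|\mathcal{E}_{i+1}|+1)c_{i+1}$ for all $i$, a short computation with this recursion gives that the total cost of the light edges of $\cycle_{\sign{\du{\tfs}}}$ is at most $\sum_{i=1}^{k-3}(2^{i}-1)c_i<c_0=C$, so $(k-1)C+1\le c(\cycle_{\sign{\du{\tfs}}})<kC$. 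For the matching lower bound I would invoke Claim~\ref{clm:kc}: any $\duu{\tfs}$-separating cut intersects each of the $k$ edge-disjoint $\duu{f_n}$--$\duu{f_s}$ paths $\pi_0,\dots,\pi_{k-1}$; since $\pi_0,\dots,\pi_{k-2}$ consist of cost-$C$ edges lying in the distinct layers $\mathcal{E}_0,\dots,\mathcal{E}_{k-2}$ and $\pi_{k-1}$ lies in $\mathcal{E}_{k-1}$, every separating $D$ contains at least one cost-$C$ edge in each of $\mathcal{E}_0,\dots,\mathcal{E}_{k-2}$ plus an edge of $\mathcal{E}_{k-1}$; thus $c(D)\ge(k-1)C+1$, and if $D$ has two cost-$C$ edges in one layer then $c(D)\ge kC>c(\cycle_{\sign{\du{\tfs}}})$. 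Hence a minimum separating $D$ has exactly one cost-$C$ edge in each layer $\mathcal{E}_0,\dots,\mathcal{E}_{k-2}$ and exactly one edge of $\mathcal{E}_{k-1}$.

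Finally I would identify $D$. Since $\du{e}\in D$ (it is the only edge along $\pi_0$), trace the cycle $Z$ of $D$ through $\du{e}$, starting at $\du{v}$: at $\du{v}$ and at every branching vertex $Z$ can only leave along a child tree edge, and within each subdivided tree edge it must run straight to the child end---any turn at a crossing vertex would force $Z$ to pick up an additional cost-$C$ edge or extra light edges in that layer, contradicting the exact count just established---so $Z$ descends to a leaf and closes up to $\du{w}$ along an $\mathcal{E}_{k-1}$ edge; that is, $Z$ is some important cycle $\cycle_{\chi}$. Then $Z$ uses \emph{all} $k-1$ cost-$C$ edges of $D$, so $D\setminus Z$ contains no cost-$C$ edge; since (by construction) every cycle of $\du{G}$ that encloses a terminal face contains a cost-$C$ edge, no cycle of $D\setminus Z$ encloses a terminal face, so $Z$ alone already separates $\du{\tfs}$ from $\overline{\du{\tfs}}$ and minimality forces $D=Z=\cycle_{\chi}$; as $\cycle_{\chi}$ realises precisely the partition $(\du{\tfs},\overline{\du{\tfs}})$ we conclude $\chi=\sign{\du{\tfs}}$, which gives both that the minimum equals $\cycle_{\sign{\du{\tfs}}}$ and that it is unique. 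The main obstacle is the layer-local core of this last paragraph: showing that no separating collection crosses a layer $\mathcal{E}_i$ more cheaply than by running straight along a single subdivided tree edge, and that the cheapest such tree edge is exactly the one prescribed by $\sign{\du{\tfs}}[i]$. This needs a careful exchange argument using the geometry of the crossing pattern inside the layer together with the hierarchy $C\gg c_1\gg c_2\gg\cdots$; verifying the light-edge bound of the second paragraph and the ``every terminal-enclosing cycle uses a cost-$C$ edge'' fact is part of the same bookkeeping.
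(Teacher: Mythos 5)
Your high-level skeleton matches the paper's: pass to the dual, use Claim~\ref{clm:kc} to force at least one cost-$C$ edge per layer, use the fact that $C$ dominates everything else to get \emph{exactly} one per layer (hence $\du{e}\in D$), reduce to a single cycle through $\du{e}$, and then identify that cycle layer by layer using the hierarchy of weights. But the decisive step is missing, and you say so yourself: the claim that the cycle ``must run straight to the child end'' inside each layer, and that the child it reaches is the one prescribed by $\sign{\du{\tfs}}[i]$, is exactly the heart of the matter, and the justification you offer for the first half of it does not work. The ``exact count just established'' concerns only the cost-$C$ edges; a turn at a crossing vertex does not create a second cost-$C$ edge in the layer, and whether it creates ``extra light edges'' cannot be judged at that point, because you have not yet fixed what the cycle is being compared against. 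In particular, a purely cost-based argument without the separation constraint cannot even force the correct branch: if $\sign{\du{\tfs}}[i]$ sends the cycle to the far side of $\du{f_i}$, the opposite branch uses strictly fewer $c_i$-edges, so ``minimum cost plus one heavy edge per layer'' alone does not pin down the layer-$i$ portion. The paper closes this by interleaving two ingredients at every layer: first, since $\duu{f_i}$ lies on $\pi_i$, the separation requirement forces the unique heavy edge of $\mathcal{E}_i\cap\cycle$ onto $\pi_i^s$ or $\pi_i^n$ according to whether $\du{f_i}\in\du{\tfs}$; second, an induction on layers (base $e_0=\du{e}$) in which a direct count of crossings shows that, among all paths from the inductively known vertex $v_a$ through $\mathcal{E}_i$ that use one heavy edge on the correct half of $\pi_i$, the straight segment to $v_{ba}$ is the \emph{unique} minimizer of the number of $c_i$-edges; the hierarchy $c_i>\sum_{j>i}|\mathcal{E}_j|c_j$ then guarantees that no savings in later layers can compensate, so $\cycle$ and $\cycle_{\sign{\du{\tfs}}}$ coincide up to layer $i$. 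Your plan defers precisely this (``a careful exchange argument'') and postpones the use of the separation constraint to the very end, where it can no longer do the per-layer work; note also that a local exchange is delicate here, since rerouting a turning path straight from $v_a$ lands at a different vertex of $L_i$ and changes all subsequent layers, which is why the paper runs a global comparison against $\cycle_{\sign{\du{\tfs}}}$ via the weight hierarchy rather than a cycle-local swap.

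Two smaller remarks. Your accounting $(k-1)C+1\le c(\cycle_{\sign{\du{\tfs}}})$ (treating the $\mathcal{E}_{k-1}$ edge as having cost $1$) is consistent with the stated weights and is fine; the paper counts ``$k$ edges of cost $C$'' and the discrepancy is immaterial since in either accounting exactly one heavy edge per layer is forced. Your auxiliary facts -- that the dual of the minimum cut may be taken to be a union of cycles with a single cycle through $\du{e}$, and that any cycle enclosing a terminal face must contain a cost-$C$ edge (so the remaining cycles are killed by minimality) -- are true and correspond to the paper's two-component/planarity argument, but they too are asserted rather than proved and belong to the same bookkeeping you defer. As it stands the proposal is a correct plan with the crucial layer-identification argument left open.
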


\begin{proof}
Let $\cycle$ be the set of edges of $G$ corresponding to some
minimum cut between $\duu{\tfs}$ and $\duu{\overline{\tfs}}$ in $\duu{G}$.
Let $\tfs \subseteq \tf$ be the set of faces of $G$ corresponding to the set $\duu{\tfs}$.
We start by observing that the edges of $\duu{G}$ corresponding to $\cycle_{\sign{\du{\tfs}}}$
form a cut between $\duu{\tfs}$ and $\duu{\overline{\tfs}}$. Consequently,
     the total weight of edges of $\cycle$ is at most the total weight of the edges of
     $\cycle_{\sign{\du{\tfs}}}$.

By Claim~\ref{clm:kc}, $\cycle$ contains at least $k$ edges of cost $C$, at least one edge of cost $C$ per edge layer (it needs to hit an edge in every path $\pi_0 , \ldots \pi_{k-1}$). Note that $\cycle_{\sign{ \du{\tfs} }}$ contains exactly $k$ edges of cost $C$. We assign the weights in a way that $C$ is larger than all other edges in the graph taken together.
This implies that $\cycle$ contains exactly one edge of cost $C$ in every edge layer $\mathcal{E}_i$.
In particular, $\cycle$ contains the edge $e = \{ v,w \}$.

Furthermore, the fact that $\duu{f_i}$ lies on $\pi_i$ implies that
the edge of weight $C$ in $\mathcal{E}_i \cap \cycle$ lies on $\pi_i^n$ if $\duu{f_i} \notin \tfs$
and lies on $\pi_i^s$ otherwise.
Consequently, in $\duu{G}-\cycle$ there is one connected component containing all vertices
of $\duu{\tfs}$ and one connected component containing all vertices of $\overline{\duu{\tfs}}$.
By the minimality of $\cycle$, we infer that $\duu{G}-\cycle$ contains 
no other connected components apart from the aforementioned two components.
By planarity, since any minimum cut in a planar graph corresponds to a collection of cycles
in its dual, this implies that $\cycle$ is a single cycle in $G$.

Let $e_i$ be the unique edge of $\mathcal{E}_i \cap \cycle$ of weight $C$
and let $e_i'$ be the unique edge of $\mathcal{E}_i \cap \cycle_{\sign{\du{\tfs}}}$ of weight $C$.
We inductively prove that $e_i = e_i'$ and
that the subpath of $\cycle$ between $e_i$ and $e_{i+1}$ is the same as on
$\cycle_{\sign{\du{\tfs}}}$.
For the base of the induction, note that $e_0 = e_0' = e$.

Consider an index $i > 0$ and the face $\du{f_i}$. If $\du{f_i} \in \du{\tfs}$, i.e., $\du{f_i}$ belongs to the north side, then $e_i$ lies south of $f_i$, that is, lies on $\pi_i^s$.
Otherwise, if $f_i \notin \tfs$, then $e_i$ lies north of $f_i$, that is, lies on $\pi_i^n$.

Let $v_a$ and $v_{ba}$ be the vertices of $\cycle_{\sign{\tfs}}$ that lie 
on $L_{i-1}$ and $L_i$, respectively. By the inductive assumption, $v_a$ is an endpoint
of $e_{i-1}' = e_{i-1}$ that lies on $\cycle$.
Let $e_i = xv_{bc}$, where $v_{bc} \in L_i$ and let $e_i' = x'v_{ba}$.
Since $\cycle$ is a cycle in $G$ that contains exactly one edge on each path $\pi_i$,
we infer that $\cycle$ contains a path between $v_a$ and $v_{bc}$ that consists of
$e_i$ and a number of edges of $\mathcal{E}_i$ of weight $c_i$.
A direct check shows that the subpath from $v_a$ to $v_{ba}$ on $\cycle_{\sign{\tfs}}$
is the unique such path with minimum number of edges of weight $c_i$.
Since the weight $c_i$ is larger than the total weight of all edges of smaller weight,
from the minimality of $\cycle$ we infer that $v_{ba} = v_{bc}$ and $\cycle$
and $\cycle_{\sign{\tfs}}$ coincide on the path from $v_a$ to $b_{ba}$.

Consequently, $\cycle$ and $\cycle_{\sign{\tfs}}$ coincide on the path from the edge $e=vw$
to the vertex $v_{\rev{\sign{\tfs}}} \in L_{k-2}$. From the minimality of $\cycle$
we infer that also the edge $\{w,v_{\rev{\sign{\tfs}}} \}$ lies on the cycle $\cycle$ and, hence,
   $\cycle = \cycle_{\sign{\tfs}}$. This completes the proof.
\end{proof}

\begin{claim}\label{clm:rank}
$\rank{A_{G,c}} \geq 2^{k-2}$. 
\end{claim}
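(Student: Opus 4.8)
The plan is to exhibit, inside the cutset--edge incidence matrix $A_{\duu G,c}$ of the primal network $\duu G$ with terminals $\duu\tf$, a $2^{k-2}\times 2^{k-2}$ submatrix equal to the identity. Since a matrix having an invertible $t\times t$ submatrix has rank at least $t$, this yields $\rank{A_{\duu G,c}}\ge 2^{k-2}$, as required.

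First I would invoke Lemma~\ref{lem:uniquecuts}. It guarantees that for every important $\duu\tfs\subset\duu\tf$ the minimum $\duu\tfs$-separating cut is unique and consists precisely of the primal edges dual to the edges of the important cycle $\cycle_{\sign{\du\tfs}}$ of $\du G$; in particular the row of $A_{\duu G,c}$ indexed by the bipartition $\{\duu\tfs,\overline{\duu\tfs}\}$ is well defined, and it has a $1$ in the column of a primal edge exactly when that edge is dual to an edge of $\cycle_{\sign{\du\tfs}}$. Since every important set contains $\duu f_n$ and misses $\duu f_s$, no two important sets are complementary, so the $2^{k-2}$ important sets give $2^{k-2}$ pairwise distinct nontrivial bipartitions, hence $2^{k-2}$ distinct rows of $A_{\duu G,c}$.

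Next I would select the columns. For a signature $\sigma\in\bitv{k-2}$ write $\cycle_\sigma$ for the important cycle with signature $\sigma$, and let $e_\sigma\in\mathcal{E}_{k-1}$ be the edge incident to the leaf $\du{v_{\rev{\sigma}}}$ of the binary tree $H-\{w\}$. This edge is unique: $\du{v_{\rev{\sigma}}}$ is a branching vertex of degree $2$ in $\du G$ (a crossing vertex is never a branching vertex), and of its two incident edges exactly one — a piece of the subdivided edge of $\mathcal{E}^H_{k-1}$ joining $\du{v_{\rev{\sigma}}}$ to $w$ — lies in $\mathcal{E}_{k-1}$. By Definition~\ref{def:impcyc} we have $e_\sigma\in\cycle_\sigma$. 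Conversely, for $\sigma'\ne\sigma$ the cycle $\cycle_{\sigma'}$ meets $\mathcal{E}_{k-1}$ only in the edge incident to $\du{v_{\rev{\sigma'}}}$: by Definition~\ref{def:impcyc} it consists of $\du e\in\mathcal{E}_0$, the edge of $\mathcal{E}_{k-1}$ incident to $\du{v_{\rev{\sigma'}}}$, and the path $\pi'$, which is the image in $\du G$ of a path inside $H-\{w\}$ and hence uses only pieces of edges of $\mathcal{E}^H_1,\dots,\mathcal{E}^H_{k-2}$, i.e.\ edges of $\mathcal{E}_1\cup\dots\cup\mathcal{E}_{k-2}$. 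Because $\rev\cdot$ is a bijection of $\bitv{k-2}$, the leaves $\du{v_{\rev{\sigma}}}$ and $\du{v_{\rev{\sigma'}}}$ are distinct, so $e_\sigma\ne e_{\sigma'}$ and $e_\sigma\notin\cycle_{\sigma'}$; in particular the $2^{k-2}$ edges $e_\sigma$ are pairwise distinct.

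Finally I would assemble the submatrix: take the $2^{k-2}$ rows of $A_{\duu G,c}$ indexed by important sets and the $2^{k-2}$ columns indexed by the primal edges dual to $\{e_\sigma:\sigma\in\bitv{k-2}\}$. By the previous two steps the entry in the row of $\duu\tfs$ and the column of $e_\sigma$ equals $1$ iff $e_\sigma\in\cycle_{\sign{\du\tfs}}$, which holds iff $\sigma=\sign{\du\tfs}$; thus, matching up the $\sigma$-indexing of rows and columns, this submatrix is $I_{2^{k-2}}$, and $\rank{A_{\duu G,c}}\ge 2^{k-2}$. I do not expect a genuine obstacle; the two points that deserve a careful check are the uniqueness of the edge $e_\sigma$ incident to a given leaf, and the claim that no important cycle other than $\cycle_\sigma$ runs through $e_\sigma$ — both of which are immediate from Definition~\ref{def:impcyc} and the way $\du G$ is obtained from $H$ by subdivision.
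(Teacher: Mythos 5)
Your proof is correct and follows essentially the same route as the paper: both exhibit a $2^{k-2}\times 2^{k-2}$ identity submatrix of $A_{\duu{G},c}$ whose rows are the (unique, by Lemma~\ref{lem:uniquecuts}) minimum cuts for the important bipartitions and whose columns are the edges of $\mathcal{E}_{k-1}$ incident to the leaves $\du{v_{\rev{\sigma}}}$. You merely spell out details the paper leaves implicit (uniqueness of the leaf edge and that no other important cycle passes through it), which is fine.
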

\begin{proof}
Recall Definition~\ref{def:impcyc} and the fact that $\cycle_{\sign{\du{\tfs}}}$ is defined for every important $\tfs \subseteq \tf$.
This means that the only edge in $\mathcal{E}_{k-1}$ that belongs to $\cycle_{\sign{\du{\tfs}}}$ is the edge adjacent to $\du{v_{\rev{\sign{\du{\tfs}}}}}$. Let us consider the part of adjacency matrix where rows correspond to the cuts corresponding to $\cycle_{\sign{\du{\tfs}}}$ for important $\tfs \subset \tf$ and where columns correspond to the edges in $\mathcal{E}_{k-1}$ of weight $C$. Let us order the cuts according to $\rev{\sign{\du{\tfs}}}$ and the edges by the index  of the adjacent vertex in $L_{k-2}$ (lexicographically). Then this part of $A_{G,c}$ is an identity matrix. Hence, $\rank{A_{G,c}} \geq 2^{k-2}$. 
\end{proof}

Lemma \ref{lem:uniquecuts} and Claim~\ref{clm:rank} provide the conditions necessary for Lemma~\ref{lem:mtl} to apply. This proves our main result stated in Theorem~\ref{thm:main}. 

\section{Doubly exponential example}\label{sec:side}

\begin{figure}[tb]
\centering
\begin{tikzpicture}

\begin{scope}[shift={(0, 5)}]
\foreach \i in {1,...,5}{
	\node[circle, draw=black, fill=black] (f\i) at (0, \i-2) {};
}
\foreach \i in {1,...,4}{
	\node[circle, draw=black, fill=white] (s\i) at (0, \i+3) {};
}
\node[circle, draw=black, fill=white, label=right:{$x$}] (x) at (9, 4.5) {};
\draw (0, 1) ellipse (1cm and 2.35cm) node [left=0.3] {$\overline{S}_0$};
\draw (0, 5.5) ellipse (1cm and 1.85cm) node [left=0.3] {$S_0$};

\node[circle, draw = black, fill=black, label=above:{$u_{S_0}$}] (v) at (5,5) {};
\foreach \i in {1,...,5}{
	\draw (v) -- (f\i) node [pos=0.6, below, sloped] {\tiny $\alpha$};
}
\foreach \i in {1,...,4}{
	\draw (v) -- (s\i) node [pos=0.6, below, sloped] {\tiny $(1+\frac{1}{r} + \frac{1}{r^2})\alpha$};
}
\foreach \i in {1, 2, 3, 4, 5, 6, 7, 8}{
\node[circle, draw=black, fill=black] (v\i) at (5, \i-1) {};
}
\node[circle, draw=black, fill=white] (v6) at (5, 5) {};
\draw (5, 5.5) ellipse (1cm and 2cm) node [left=0.3] {$Z$};

\node[circle, draw=black, fill=white, label=above:{$w_{Z}$}] (zb) at (8, 7) {};
\draw[dashed] (7,7.1) arc (-170:-120:1) node [right] {$\frac{\ell}{2}$};
\draw (x) -- (zb) node [midway, above, sloped] {\tiny $\frac{\ell}{2}-1$};
\draw (v7) -- (zb) node [midway, above, sloped] {\tiny 1};
\draw (v6) -- (zb)node [midway, above, sloped] {\tiny 1};
\draw (v5) -- (zb)node [midway, above, sloped] {\tiny 1};
\draw (v8) -- (zb)node [midway, above, sloped] {\tiny 1};

\end{scope}

\draw[very thick,dashed] (-2, 3) -- (11, 3);

\begin{scope}[shift={(0, -5)}]
\foreach \i in {1,...,5}{
	\node[circle, draw=black, fill=black] (f\i) at (0, \i-2) {};
}
\foreach \i in {1,...,4}{
	\node[circle, draw=black, fill=white] (s\i) at (0, \i+3) {};
}
\node[circle, draw=black, fill=white, label=right:{$x$}] (x) at (9, 4.5) {};
\draw (0, 1) ellipse (1cm and 2.35cm) node [left=0.3] {$\overline{S}_0$};
\draw (0, 5.5) ellipse (1cm and 1.85cm) node [left=0.3] {$S_0$};

\node[circle, draw = black, fill=black, label=above:{$u_{S_0}$}] (v) at (5,5) {};
\foreach \i in {1,...,5}{
	\draw (v) -- (f\i) node [pos=0.6, below, sloped] {\tiny $\alpha$};
}
\foreach \i in {1,...,4}{
	\draw (v) -- (s\i) node [pos=0.6, below, sloped] {\tiny $(1+\frac{1}{r} + \frac{1}{r^2})\alpha$};
}
\foreach \i in {1, 2, 3, 4, 5, 6, 7, 8}{
\node[circle, draw=black, fill=black] (v\i) at (5, \i-1) {};
}
\node[circle, draw=black, fill=white] (v6) at (5, 5) {};
\draw (5, 1.5) ellipse (1cm and 2cm) node [left=0.3] {$Z$};
\node[circle, draw=black, fill=black, label=below:{$w_{Z}$}] (zw) at (8, 2){};
\draw (x) -- (zw) node [midway, above, sloped] {\tiny $\frac{\ell}{2}-1$};
\draw (v1) -- (zw) node [midway, above, sloped] {\tiny 1};
\draw (v2) -- (zw) node [midway, above, sloped] {\tiny 1};
\draw (v3) -- (zw) node [midway, above, sloped] {\tiny 1};

\end{scope}

\end{tikzpicture}
\caption{Illustration of the construction. The two panels correspond to two cases in the proof, either $u_{S_0} \in Z$ (top panel) or $u_{S_0} \notin Z$ (bottom panel).}\label{fig:double-exp}
\end{figure}

In this section we show an example graph for which the compression technique introduced by Hagerup et al~\cite{HagerupKNR98} does indeed produce a mimicking network on
roughly $2^{\binom{k-1}{\lfloor (k-1)/2 \rfloor}}$ vertices.
Our example relies on doubly exponential edge costs. Note that an example with single exponential costs can be compressed into a mimicking network of size single exponential in $k$ using the techniques of~\cite{KratschW12}.

Before we go on, let us recall the technique of Hagerup et al~\cite{HagerupKNR98}. Let $G$ be a weighted graph and $Q$ be the set of terminals. Observe that a minimum cut separating $S \subset Q$ from $\overline{S}=Q \setminus S$, when removed from $G$, divides the vertices of $G$ into two sides: the side of $S$ and the side of $\overline{S}$. The side is defined for each vertex, as all connected components obtained by removing the minimum cut contain a terminal. Now if two vertices $u$ and $v$
are on the same side of the minimum cut between $S$ and $\overline{S}$ for every $S \subset Q$, then they can be merged without changing the size of any minimum $S$-separating cut. As a result there is at most $2^{2^k}$ vertices in the graph;
as observed by~\cite{ChambersE13,KhanR14}, this bound can be improved to roughly $2^{\binom{k-1}{\lfloor (k-1)/2 \rfloor}}$. After this brief introduction we move on to describing our example.

Our construction builds up on the example provided in~\cite{KrauthgamerR13} in the proof of Theorem 1.2.
As stated in Theorem~\ref{thm:side} of this paper, our construction works for parameter $k$ equal to $6$ modulo $8$.
Let $k = 2r+2$, that is, $r$ is equal to $2$ modulo $4$.
These remainder assumptions give the following observation via standard calculations.
\begin{lemma}\label{lem:ell-even}
The integer $\ell := \binom{2r+1}{r}$ is even.
\end{lemma}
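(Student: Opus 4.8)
The statement is purely number-theoretic, so I would dispense with the graph-theoretic setup entirely and argue directly about the $2$-adic valuation of $\binom{2r+1}{r}$. The cleanest route is Kummer's theorem: the exponent $v_2\!\left(\binom{m+n}{m}\right)$ equals the number of carries that occur when $m$ and $n$ are added in base $2$. Applying this with $m=r$ and $n=r+1$ (so that $m+n=2r+1$), it suffices to exhibit at least one carry in the binary addition $r + (r+1)$.

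The key step is to read off the two lowest bits. The hypothesis $k\equiv 6 \pmod 8$ together with $k=2r+2$ gives $2r\equiv 4\pmod 8$, hence $r\equiv 2\pmod 4$; equivalently, in binary $r$ ends in $\dots 10$ (bit $0$ equal to $0$, bit $1$ equal to $1$). Consequently $r+1$ ends in $\dots 11$. Now add $r$ and $r+1$ bit by bit: position $0$ contributes $0+1=1$ with no carry, but position $1$ contributes $1+1=10$, which produces a carry into position $2$. Thus the addition $r+(r+1)$ has at least one carry, so $v_2\!\left(\binom{2r+1}{r}\right)\ge 1$, i.e.\ $\ell=\binom{2r+1}{r}$ is even.

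I do not expect any genuine obstacle here; the only thing to be a little careful about is translating the stated congruences ($k\equiv 6\pmod 8$, then $k=2r+2$, then $r\equiv 2\pmod 4$) correctly before invoking Kummer. As an alternative that avoids quoting Kummer, one can use the elementary identity $\binom{2r+2}{r+1}=\binom{2r+1}{r}+\binom{2r+1}{r+1}=2\binom{2r+1}{r}$, so that $\binom{2r+1}{r}=\tfrac12\binom{k}{k/2}$, and then apply Legendre's formula to the central binomial coefficient $\binom{k}{k/2}$: since $k/2=r+1$ ends in $\dots 11$ in binary, adding $(r+1)+(r+1)$ produces carries at positions $0$ and $1$, so $4\mid\binom{k}{k/2}$ and hence $\binom{2r+1}{r}$ is even. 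Either version is a short ``standard calculation'' as claimed in the text.
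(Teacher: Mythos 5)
Your proof is correct, and it reaches the conclusion by a somewhat different route than the paper. The paper works directly with Legendre's formula: it writes $v_2\bigl(\binom{2r+1}{r}\bigr)$ as $\sum_{i\ge 1}\lfloor (2r+1)/2^i\rfloor-\sum_{i\ge 1}\lfloor r/2^i\rfloor-\sum_{i\ge 1}\lfloor (r+1)/2^i\rfloor$, simplifies this to $r-\sum_{i\ge 1}\lfloor r/2^i\rfloor$ using $r\equiv 2\pmod 4$, and then bounds it below by $\tfrac12$ via explicit inequalities (hence it is a positive integer, so at least $1$). You instead invoke Kummer's theorem and observe that, because $r$ ends in the binary pattern $10$ and $r+1$ in $11$, the addition $r+(r+1)$ carries at bit position $1$, giving $v_2\bigl(\binom{2r+1}{r}\bigr)\ge 1$ immediately. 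The two are close cousins (Kummer's theorem is the carry-counting reformulation of the Legendre computation), but your version is shorter, pinpoints exactly which hypothesis is used (only the two lowest bits of $r$, i.e.\ $r\equiv 2\pmod 4$), and makes visible why some hypothesis is necessary: for $r=2^a-1$ the addition is carry-free and $\binom{2r+1}{r}$ is odd. The paper's version has the merit of being self-contained, using nothing beyond Legendre's formula and elementary estimates, which is presumably why it is phrased as a ``standard calculation.'' Your alternative argument via $\binom{2r+2}{r+1}=2\binom{2r+1}{r}$ and the central binomial coefficient is also correct, and your translation of the congruences ($k\equiv 6\pmod 8$, $k=2r+2$, hence $r\equiv 2\pmod 4$) matches the paper's setup.
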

\begin{proof}
Recall that $r$ equals $2$ modulo $4$.
Since $\binom{2r+1}{r} = \frac{(2r+1)!}{r!(r+1)!}$, while the largest power of $2$ that divides $a!$ equals $\sum_{i=1}^\infty \lfloor \frac{a}{2^i} \rfloor$, we have that
the largest power of $2$ that divides $\binom{2r+1}{r}$ equals:
\begin{align*}
& \sum_{i=1}^\infty \left\lfloor \frac{2r+1}{2^i} \right\rfloor - \sum_{i=1}^\infty \left\lfloor \frac{r}{2^i} \right\rfloor - \sum_{i=1}^\infty \left\lfloor \frac{r+1}{2^i} \right\rfloor 
 = r + \sum_{i=1}^\infty \left\lfloor \frac{r}{2^i} \right\rfloor - 2 \sum_{i=1}^\infty \left\lfloor \frac{r}{2^i} \right\rfloor  \\
&\quad = r - \sum_{i=1}^\infty \left\lfloor \frac{r}{2^i} \right\rfloor 
 = r - \frac{r}{2} - \frac{r-2}{4} - \sum_{i=1}^\infty \left\lfloor \frac{r}{4 \cdot 2^i} \right\rfloor 
 \geq  \frac{1}{2} + \frac{r}{4} - \sum_{i=1}^\infty \frac{r}{4 \cdot 2^i}
 = \frac{1}{2}.
\end{align*}
In particular, it is positive. This finishes the proof of the lemma.
\end{proof}

We start our construction with a complete bipartite graph $G_0 = (Q_0, U, E)$, where one side of the graph consists of $2r+1 = k-1$ terminals $Q_0$, and the other side of the graph consists of $\ell = \binom{2r+1}{r}$ non-terminals
$U = \{u_S~|~S \in \binom{Q_0}{r}\}$. That is, the vertices $u_S \in U$ are indexed by subsets of $Q_0$ of size $r$.
The cost of edges is defined as follows. Let $\alpha$ be a large constant that we define later on.
Every non-terminal $u_S$ is connected by edges of cost $\alpha$ to every terminal $q \in Q_0 \setminus S$ and by edges of cost $(1+\frac{1}{r} + \frac{1}{r^2})\alpha$ to every terminal $q \in S$.
To construct the whole graph $G$, we extend $G_0$ with a last terminal $x$ (i.e., the terminal set is $Q = Q_0 \cup \{x\}$)
 and build a third layer of $m = \binom{\ell}{\ell/2}$ non-terminal vertices $W = \{w_Z~|~Z \in \binom{U}{\ell/2}\}$. That is, the vertices $w_Z \in W$ are indexed by subsets of $U$ of size $\ell/2$.
There is a complete bipartite graph between $U$ and $W$ and every vertex of $W$ is adjacent to $x$.
The cost of edges is defined as follows. An edge $u_S w_Z$ is of cost $1$ if $u_S \in Z$, and of cost $0$ otherwise. Every edge of the form $xw_Z$ is of cost $\ell/2 - 1$.
This finishes the description of the construction. For the reference see the top picture in Figure~\ref{fig:double-exp}.

We say that a set $S \subseteq Q$ is \emph{important} if $x \in S$ and $|S| = r+1$. Note that there are $\ell = \binom{2r+1}{r} = \binom{k-1}{\lfloor (k-1)/2 \rfloor}$ important sets.
We observe the following.
\begin{lemma}\label{lem:dblexp}
Let $S \subset Q$ be important and let $S_0 = S \setminus \{x\} = S \cap Q_0$. For $\alpha > r^2 \ell|W|$,
    the vertex $w_Z$ is on the $S$ side of the minimum cut between $S$ and $Q \setminus S$
    if and only if $u_{S_0} \in Z$. 
\end{lemma}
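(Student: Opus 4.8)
The plan is to analyze the minimum $S$-separating cut explicitly, exploiting the huge separation of scales: edges incident to $U$ cost $\Theta(\alpha)$ with $\alpha$ enormous, while edges between $U$ and $W$ and between $W$ and $x$ have cost $O(\ell)$. First I would argue that every vertex $u_S \in U$ lands on a predictable side of the cut. For a fixed important $S = S_0 \cup \{x\}$ with $|S_0| = r$, consider the candidate cut that puts $u_{S_0}$ on the $\overline{S}$ side and all other $u_{S'} \in U$ on the $S$ side, and routes $W$ and $x$ optimally given that; the cost of the $U$-incident part is a sum of $\Theta(\alpha)$ terms, and a short calculation (comparing, for each $u_{S'}$, the cost of cutting it off from $S_0$ versus from $Q_0 \setminus S_0$, using the $(1+\frac1r+\frac1{r^2})\alpha$ versus $\alpha$ edge weights and $|S'| = r = |S_0|$) shows this assignment of $U$ is the unique optimum up to an additive $O(\alpha/r^2)$, which in turn (since $\alpha/r^2 > \ell|W|$ dominates all cheap edges) forces it exactly in any minimum cut. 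The point is that $u_{S_0}$ is the unique vertex of $U$ whose cheap-neighbor set in $Q_0$ is exactly $S_0$, so it is the unique one that ``wants'' to be on the $\overline{S}$ side, and the cost gap between the best and second-best $U$-assignment is $\Theta(\alpha/r^2)$, larger than everything else combined.

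Once the side of every $u_{S'}$ is pinned down -- namely $u_{S_0}$ on the $\overline S$ side, all other $u_{S'}$ on the $S$ side -- the remaining question is where each $w_Z$ and $x$ go, and this is now a genuinely small optimization over the cheap edges only. Here I would note that $x$ must be on the $S$ side (it is a terminal of $S$), and for a fixed $w_Z$ the contribution to the cut depends only on: its edge to $x$ of cost $\ell/2 - 1$, and its edges to $U$, of which exactly those $u_{S'} \in Z$ are present with cost $1$. If $w_Z$ is placed on the $S$ side it pays $1$ for each $u_{S'} \in Z$ that is on the $\overline S$ side, i.e.\ it pays $1$ exactly when $u_{S_0} \in Z$ (and pays nothing for the edge to $x$). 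If $w_Z$ is placed on the $\overline S$ side it pays $\ell/2 - 1$ for the edge $xw_Z$ plus $1$ for each $u_{S'} \in Z$ on the $S$ side, i.e.\ $\ell/2 - 1 + (|Z| - [u_{S_0}\in Z]) = \ell/2 - 1 + \ell/2 - [u_{S_0}\in Z] = \ell - 1 - [u_{S_0} \in Z]$. So the $S$ side costs $[u_{S_0}\in Z]$ and the $\overline S$ side costs $\ell - 1 - [u_{S_0}\in Z]$; since $\ell \geq 2$ (Lemma~\ref{lem:ell-even}, in fact $\ell$ large), the $S$ side is strictly cheaper in both cases, so \emph{every} $w_Z$ is on the $S$ side in the minimum cut.

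Wait -- that would make the ``if and only if'' vacuous, so the intended reading must be finer: the statement is about which side $w_Z$ lies on, and the above shows $w_Z$ is always on the $S$ side, which contradicts a nontrivial ``iff''. The resolution is that the relevant cut is not the globally cheapest way to separate the terminals but is constrained, or that one of the two sides is forced for an independent reason; more likely, I have the direction of the edge-cost convention backwards and an edge $u_Sw_Z$ costs $1$ when $u_S \notin Z$. Redoing the bookkeeping under that convention: on the $S$ side $w_Z$ pays $1$ for each $u_{S'} \notin Z$ on the $\overline S$ side $=[u_{S_0}\notin Z]$; on the $\overline S$ side it pays $\ell/2-1 + (\ell - \ell/2) - [u_{S_0} \notin Z] \cdot 0\ldots$ -- the clean outcome one wants is that the two costs are $\ell/2 - 1$ versus $\ell/2 - [u_{S_0} \in Z]$, so that $w_Z$ goes to the $S$ side iff $u_{S_0} \in Z$ and to the $\overline S$ side (paying $\ell/2-1$, one cheaper) iff $u_{S_0}\notin Z$. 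The main obstacle in writing the proof is therefore pinning down the exact cost bookkeeping so that the balance comes out to a difference of exactly $1$ hinging on the single indicator $[u_{S_0}\in Z]$; the scale-separation argument that fixes the $U$-side is routine once the constants ($\alpha > r^2\ell|W|$) are in hand, and the $w_Z$ analysis is then a two-line comparison for each $Z$ independently, with the edge $xw_Z$ of cost $\ell/2 - 1$ calibrated precisely to produce the tie-break. I would close by remarking that since each $w_Z$'s side is determined independently and the $\ell = |U|$ choices of $S_0$ (equivalently, important $S$) give $2^{|U|}$ distinct side-patterns on $W$ as $Z$ ranges over $\binom{U}{\ell/2}$, no two $w_Z$'s can be merged, which is what feeds into Theorem~\ref{thm:side}.
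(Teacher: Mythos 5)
There is a genuine error at the heart of your plan: you have the side of $U$ reversed. In the construction, $u_{S'}$ is joined by the \emph{expensive} edges of cost $(1+\frac1r+\frac1{r^2})\alpha$ to the terminals \emph{in} $S'$ and by the cheap edges of cost $\alpha$ to $Q_0\setminus S'$, so the cheap-neighbor set of $u_{S_0}$ is $Q_0\setminus S_0$, not $S_0$ as you assert. Computing the balance as in the paper, $u_{S_0}$ pays $(r+1)\alpha$ to join the $S$ side versus $r(1+\frac1r+\frac1{r^2})\alpha=(r+1+\frac1r)\alpha$ to join $\overline S$, so it prefers the $S$ side by $\frac{\alpha}{r}$; every other $u_{S'}$ prefers the $\overline S$ side by more than $\frac{\alpha}{r^2}$, and since $\alpha>r^2\ell|W|$ these margins exceed the total weight of all edges incident to $W$, so in any minimum cut $u_{S_0}$ is the \emph{unique} $U$-vertex on the $S$ side (all others on $\overline S$) — exactly the opposite of the assignment you fix. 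With the correct assignment the $w_Z$ bookkeeping is clean under the stated convention $c(u_{S'}w_Z)=1$ iff $u_{S'}\in Z$: if $u_{S_0}\in Z$, joining $S$ costs $w_Z$ exactly $\ell/2-1$ (the other members of $Z$, all on $\overline S$) while joining $\overline S$ costs $(\ell/2-1)+1=\ell/2$ (the edge $xw_Z$ plus the edge to $u_{S_0}$); if $u_{S_0}\notin Z$, the costs are $\ell/2$ versus $\ell/2-1$. This gives precisely the nontrivial ``iff''.

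Because of the reversal, your computation made every $w_Z$ land on the $S$ side; you correctly sensed that this trivializes the lemma, but you misdiagnosed the cause as a backwards edge-cost convention for $u_{S'}w_Z$ (the convention you originally used is the paper's) and then only reverse-engineered the costs ``one wants'' ($\ell/2-1$ versus $\ell/2-[u_{S_0}\in Z]$) without deriving them. As written, the proposal neither establishes the forced sides of $U$ (the balance calculation is claimed but not done, and the claimed optimum is wrong) nor completes the $w_Z$ comparison, so the proof does not go through; fixing it amounts to redoing both steps with the orientation above, which is the paper's argument.
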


\begin{proof}
First, note that if $\alpha > r^2 \ell |W|$, then the total cost of all the edges incident to vertices of $W$ is less than $\frac{1}{r^2} \alpha$.
Intuitively, this means that cost of the cut inflicted by the edges of $G_0$ is of absolutely higher importance than the ones incident with $W$.

Consider an important set $S \subseteq Q$ and let $S_0 = S \setminus \{x\} = S \cap Q_0$.

Let $u_{S'} \in U$. The \emph{balance} of the vertex $u_{S'}$, denoted henceforth $\beta(u_{S'})$, is the difference of the cost of edges
connecting $u_{S'}$ with $S_0$ and the ones connecting $u_{S'}$ and $Q_0 \setminus S$. Note that we have
$$\beta(u_{S_0}) = r \cdot \left(1+\frac{1}{r}+\frac{1}{r^2}\right) \alpha - \left(r+1\right) \cdot \alpha = \frac{1}{r} \alpha.$$
On the other hand, for $S' \neq S_0$, the balance of $u_{S'}$ can be estimated as follows:
$$\beta(u_{S'}) \leq (r-1) \cdot \left(1+\frac{1}{r}+\frac{1}{r^2}\right) \alpha + \alpha - r \cdot \alpha - \left(1+\frac{1}{r}+\frac{1}{r^2}\right) \alpha 
              = -\frac{r+2}{r^2} \alpha < -\frac{1}{r^2} \alpha.$$
Consequently, as $\frac{1}{r^2} \alpha$ is larger than the cost of all edges incident with $W$,
in a minimum cut separating $S$ from $Q \setminus S$, the vertex $u_{S_0}$ picks the $S$ side, while every vertex $u_{S'}$ for $S' \neq S_0$ picks the $Q \setminus S$ side.

Consider now a vertex $w_Z \in W$ and consider two cases: either $u_{S_0} \in Z$ or $u_{S_0} \notin Z$; see also Figure~\ref{fig:double-exp}.

\myparagraph{Case 1: $u_{S_0} \in Z$. } As argued above, all vertices of $U$ choose their side according to what is best in $G_0$, so $u_{S_0}$ is the only vertex in $U$ on the $S$ side.
To join the $S$ side, $w_Z$ has to cut $\ell/2-1$ edges $u_{S'} w_Z$ of cost $1$ each, inflicting a total cost of $\ell/2-1$;
note that it does not need to cut the edge $u_{S_0}w_Z$, which is of cost $1$ as $u_{S_0} \in Z$.
To join the $Q \setminus S$ side, $w_Z$ needs to cut $xw_Z$ of cost $\ell/2-1$
and $u_{S_0}w_Z$ of cost $1$, inflicting a total cost of $\ell/2$.
Consequently, $w_Z$ joins the $S$ side.

\myparagraph{Case 2: $u_S \notin Z$. } Again all vertices of $U$ choose their side according to what is best in $G$, so $u_{S_0}$ is the only vertex in $U$ on the $S$ side.
To join the $S$ side, $w_Z$ has to cut $\ell/2$ edges $u_{S'}w_Z$ of cost $1$ each, inflicting a total
cost of $\ell/2$.
To join the $Q \setminus S$ side, $w_Z$ has to cut one edge of positive cost, namely the edge $xw_Z$ of cost $\ell/2-1$.
Consequently, $w_Z$ joins the $Q \setminus S$ side.

This finishes the proof of the lemma.
\end{proof}
Lemma~\ref{lem:dblexp} shows that $G$ cannot be compressed using the technique presented in~\cite{HagerupKNR98}.
To see that let us fix two vertices $w_Z$ and $w_{Z'}$ in $W$,
and let $u_S \in Z \setminus Z'$.
Then, Lemma~\ref{lem:dblexp} shows that $w_Z$ and $w_{Z'}$ lie on different
sides of the minimum cut between $S$ and $Q \setminus S$.
Thus, $w_Z$ and $w_{Z'}$ cannot be merged.
Similar but simpler arguments show that no other pair of vertices in $G$ can be merged.
To finish the proof of Theorem~\ref{thm:side}, observe that
$$|W| = \binom{\ell}{\ell/2} = \Omega\left(2^{\ell}/\sqrt{\ell}\right) = \Omega\left(2^{\binom{k-1}{\lfloor (k-1)/2 \rfloor} - k/2}\right).$$

\bibliographystyle{abbrv}

\bibliography{references}

\end{document}